\newcommand{\ket}[1]{\vert#1\rangle}
\newcommand{\bracket}[3]{\langle#1\vert#2\vert#3\rangle}
\newcommand{\ketbra}[2]{\vert #1\rangle\langle#2 \vert}
\newtheorem{proposition}{Proposition}
\newtheorem{conjecture}{Conjecture}
\newtheorem{corollary}{Corollary}
\begin{document}


\title{Scalable entanglement certification via quantum communication}

\author{Pharnam Bakhshinezhad}
\email{pharnam.bakhshinezhad@tuwien.ac.at}
\thanks{P.B. previously published as Faraj Bakhshinezhad.}
\affiliation{Atominstitut, Technische Universit{\"a}t Wien, Stadionallee 2, 1020 Vienna, Austria}

\author{Mohammad Mehboudi}
\affiliation{Atominstitut, Technische Universit{\"a}t Wien, Stadionallee 2, 1020 Vienna, Austria}

\author{Carles Roch i Carceller}
\affiliation{Department of Physics and NanoLund, Lund University, Box 118, 22100 Lund, Sweden}

\author{Armin Tavakoli}
\email{armin.tavakoli@teorfys.lu.se}
\affiliation{Department of Physics and NanoLund, Lund University, Box 118, 22100 Lund, Sweden}


\begin{abstract}
Harnessing the advantages of shared entanglement for sending quantum messages often requires the implementation of complex two-particle entangled measurements.  We investigate entanglement advantages in protocols that use only the simplest two-particle measurements, namely product measurements. For experiments in which only the dimension of the message is known, we show that robust entanglement advantages are possible, but that they are fundamentally limited by Einstein-Podolsky-Rosen steering. Subsequently, we propose a natural extension of the standard scenario for these experiments and show that it circumvents this limitation. This leads us to prove entanglement advantages from every entangled two-qubit Werner state, evidence its generalisation to high-dimensional systems and establish a connection to quantum teleportation. Our results reveal the  power of product measurements for generating quantum correlations in entanglement-assisted communication and they pave the way for practical semi-device-independent entanglement certification well-beyond the constraints of Einstein-Podolsky-Rosen steering.
\end{abstract}

\maketitle


\section{Introduction}
Shared entanglement between a sender and a receiver that are connected over a quantum channel is the most powerful communication resource in quantum theory. This is famously showcased in the dense coding protocol, where entanglement doubles the classical capacity of a noise-free qubit channel  \cite{Bennett1992}. If the channel is used only once, which is the scenario most pertinent for experimental considerations, this entanglement-assisted prepare-and-measure (EAPM) scenario (see illustration in  Fig.~\ref{Fig_scenario1}) can equally well be viewed as setting for efficient quantum communication and as a platform for semi-device-independent quantum information protocols. The latter is because the state, the sender and the receiver are uncharacterised devices, and only knowledge of the dimension of the channel is required to deduce the quantum nature of the correlations. In this sense, the EAPM scenario offers an appealing path to certify the advantages of entanglement in experiments with limited characterisation.

A central obstacle for harnessing entanglement-advantages in the EAPM scenario is that protocols commonly need the receiver (Charlie) to measure both the particles, namely the one coming from the entanglement source and the one arriving over the channel, in an entangled basis. In for example optical systems, such measurements are well-known to be impossible without extra photons or nonlinear effects \cite{Lutkenhaus1999, Vaidman1999, Loock2004}, which for instance can limit experiments to using only single-photon carriers of multiple qubits (see e.g.~\cite{Huang2022, Guo2023}).  In the EAPM scenario, for the simplest case of qubit systems, a series of dense coding experiments have over time implemented increasingly sophisticated Bell basis measurements and thereby approached the theoretical limit of the entanglement advantage \cite{Mattle1996, Fang2000, Li2002, Jing2003, Schaetz2004, Schuck2006, Barreiro2008, Williams2017}. For systems of higher dimension than qubit,  the situation is extra challenging. Even resolving one element of a high-dimensional entangled basis is impossible with ancilla-free linear optics \cite{Calsamiglia2002}. The most high-dimensional optical Bell basis measurement hitherto realised is limited to three-level systems and uses ancillary photons \cite{Hu2020, Luo2019}.  In the EAPM scenario, this has led experiments based on high-dimensional entanglement and quantum communication to instead focus on simpler, suboptimal, measurements, compatible with standard linear optics \cite{Hu2018}. The challenges associated with entangled measurements are broadly relevant in the different correlation tests accommodated by the EAPM scenario \cite{Tavakoli2021, Pauwels2022, vieira2023interplays}.

\begin{figure}
	\includegraphics[width=1\columnwidth]{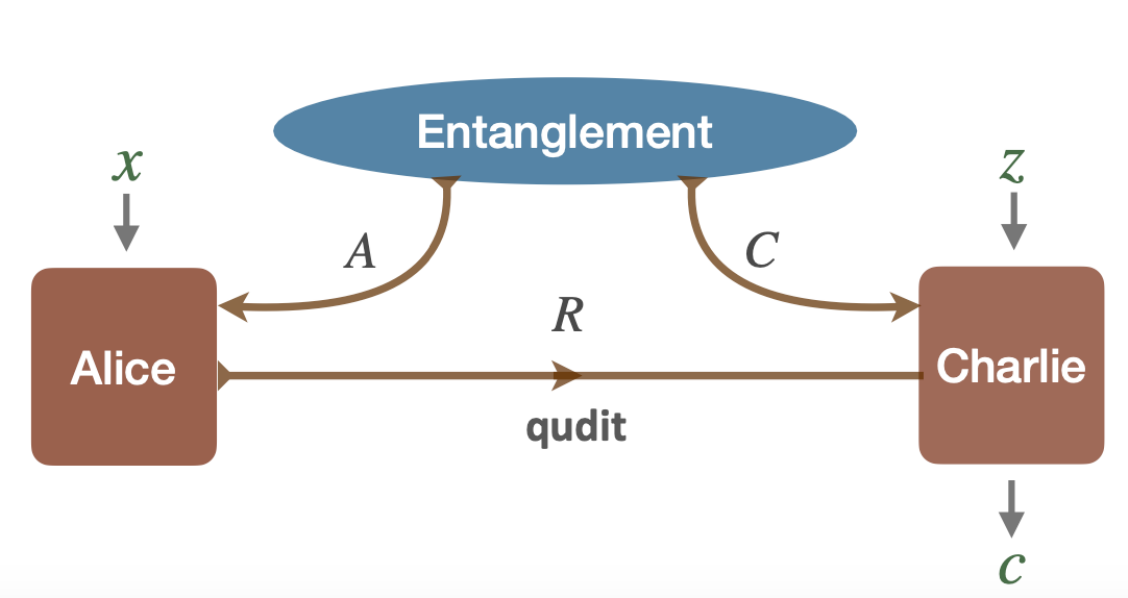}
	\caption{Entanglement-assisted prepare-and-measure scenario between a sender (Alice) and a receiver (Charlie). The information, $x$, is encoded into one share of the entangled state.}\label{Fig_scenario1}
\end{figure}

However, while entangled measurements are provably necessary for the specific task of dense coding \cite{Pauwels2022b}, this is not true in general. Interestingly, it was recently shown that there exists well-known communication tasks which in the EAPM scenario admit their best implementation in protocols that rely only on the simplest joint measurements \cite{Piveteau2022}. These are mere product measurements of the source- and message-particles; they are therefore classical post-processings of two completely separate single-particle measurements. In principle, this can greatly simplify the experiments, as the particles do not need to interfere with each other in the measurement device, and can even be measured at separate times. Nevertheless, presently, little is known about how such protocols can be constructed. Moreover, another important aspect concerns the noise-robustness of protocols based on product measurements. While entangled measurements are well-known to reveal the correlation advantages of shared entanglement even from very noisy states \cite{Abbott2018, Moreno2021}, no counterpart is known for product measurements. That is, even though product measurements can sometimes be optimal under ideal conditions, their performance might deteriorate in the presence of signficant amounts of noise in the entanglement distribution, rendering them unable to certify entanglement that is well-within the reach of schemes that use entangled measurements. Indeed, certifying noisy forms of entanglement is a central matter for correlation experiments.

Here, we investigate entanglement advantages revealed by product measurements in the EAPM scenario and show that they are much more powerful than previously known. {In all our scenarios,  the source is fully untrusted. The operations of all the parties are also untrusted, up to the bounded dimension of the quantum communication channels.} The article is structured as follows. In section~\ref{sec0} we formalise the EAPM scenario. In section~\ref{sec1}, we investigate high-dimensional entanglement by introducing concrete certification schemes in the EAPM scenario. We prove that the paradigmatic isotropic state is certified at noise rates well-above the known thresholds for Bell nonlocality, and closely resembling the thresholds known for Einstein-Podolsky-Rosen steering.  In section~\ref{sec2}, we show that the results from section~\ref{sec1} are actually close to optimal. This follows from a no-go result, in which we show that steering is a necessary condition for certifying entanglement advantages in any EAPM scenario with product measurements. In section~\ref{sec3}, we set out to circumevent this fundamental limitation. We do so by considering a natural extension of the standard EAPM scenario, which we name the symmetric EAPM scenario. In the symmetric scenario, classical information is not only encoded in one share of the state (Alice, in Fig~\ref{Fig_scenario1}) but in both shares of the state (see Fig~\ref{Fig_scenario2}). For qubit systems, we prove that every entangled Werner state implies an advantage. Finally, in section~\ref{sec4}, we introduce a prime-dimensional generalisation of the scheme in section~\ref{sec3}. We present evidence, on the basis of which we argue that every state supporting fidelity-based quantum teleportation can be certified. This notably includes every entangled isotropic state.

\section{The EAPM scenario}\label{sec0}
The EAPM scenario is illustrated in Fig~\ref{Fig_scenario1}. Alice and Charlie share a state $\rho_{AC}$, which can have an arbitrary local dimension. Alice selects a classical input $x$ and encodes it on her share of the state via a completely positive trace-preserving (CPTP) map, $\Lambda_x^{A\rightarrow R}$, whose output system ($R$), which we call the message, has a known dimension $d$. The total state arriving to Charlie becomes
\begin{equation}
\tau_x^{RC}=(\Lambda_x^{A\rightarrow R}\otimes \openone^C)[\rho_{AC}].
\end{equation}
Finally, Charlie selects a classical input $z$ and performs a joint quantum measurement $\{M^{RC}_{c|z}\}$ with outcome $c$. The Born-rule gives the quantum correlations, $p(c|x,z)=\tr\left(\tau_x^{RC} M_{c|z}\right)$. Notice that the set of states $\{\tau_x^{RC}\}$ realisable via arbitrary CPTP maps for Alice and arbitrary initial entangled state can be completely characterised as  $\tau_{x}^{RC}$ being a $d\times D$ dimensional bipartite state with $\tr_R(\tau_x^{RC})=\tau^C$, where $\tau^C$ is the reduced state which is notably independent of $x$. Note that $D$ is the dimension of the source particle, which can be arbitrary.

Our focus is on protocols where Charlie's measurements act separately on systems $R$ and $C$. This can be a product measurement followed by a classical post-processing of the respective outcomes, i.e.~
\begin{equation}\label{prod}
M^{RC}_{c|z}=\sum_{c_1,c_2} p(c|c_1,c_2) N^R_{c_1|z} \otimes N^C_{c_2|z},
\end{equation}
where $\{N^R_{c_1|z}\}$ and $\{N^C_{c_2|z}\}$ are single-system measurements and $p(c|c_1,c_2)$ is some (perhaps stochastic) rule for deciding the final outcome $c$ from the local outcomes $(c_1,c_2)$. More generally, the measurements can also be adaptive \cite{Pauwels2022b}, i.e.~Charlie could use the outcome on system $R$ to inform his measurement on system $C$, and vice versa. These adaptive product measurements take the form  $
M^{RC}_{c|z}=\sum_{c_1,c_2} p(c|c_1,c_2) N^R_{c_1|z} \otimes N^C_{c_2|z,c_1}$ and $M^{RC}_{c|z}=\sum_{c_1,c_2} p(c|c_1,c_2) N^R_{c_1|z,c_2} \otimes N^C_{c_2|z}$ respectively.

We are interested in comparing the correlations, $p(c|x,z)$, obtained from shared entanglement and product measurements, with those obtained without shared entanglement. The latters correspond to standard (entanglement-unassisted) quantum prepare-and-measure scenarios, i.e.~Alice can send any $d$-dimensional state $\alpha_x$ to Charlie who can perform an arbitrary quantum measurement on it,
\begin{equation}\label{PM}
p(c|x,z)=\tr\left(\alpha_x M_{c|z}\right).
\end{equation}
Shared classical randomness is additionally permitted between the parties.

\section{Certifying high-dimensional entanglement in the EAPM scenario}\label{sec1}
We begin by identifying a scheme in the EAPM scenario that enables us to certify entanglement under substantial and dimension-scalable noise rates. To this end, consider the following scheme. Let Alice have an input $x\equiv (x_0,x_1)\in\{0,\ldots,d-1\}^2$ and Charlie have an input $z\in\{0,\ldots,d\}$, where $d$ is prime number. The parties have the objective to compute (via the output $c$), for each $z$, a specific binary function of $x$. These functions are 
\begin{align}\nonumber
&z\neq d:  &&w_z=x_1-2zx_0 \mod{d} \\\label{wincond1}
&z=d: && w_d=x_0.
\end{align}
The average success probability of computing the functions  is therefore 
\begin{equation}\label{aspS}
\mathcal{S}_d=\frac{1}{d^2(d+1)}\sum_{x,z}p(c=w_z|x,z).
\end{equation}
Next, we will analyse $\mathcal{S}_d$ in a quantum setting with and without entanglement and prove that it certifies entanglement under product measurements.

\subsection{Protocol with shared entanglement and product measurements}
Consider now a specific quantum protocol based on shared entanglement and product measurements. Let the shared state be $\rho_{AC}=\phi^+_d$, where $\phi^+_d=|\phi^+_d \rangle\langle \phi^+_d|$ is the maximally entangled state, $|\phi^+_d\rangle=\frac{1}{\sqrt{d}}\sum_{i=0}^{d-1}\ket{ii}$. Next, define the clock and shift operators $Z=\sum_{k=0}^{d-1}e^{\frac{2\pi i k}{d}}\ketbra{k}{k}$ and $X=\sum_{k=0}^{d-1}\ketbra{k+1}{k}$, where $k+1$ is computed modulo $d$. Choose Alice's CPTP maps, $\Lambda_x$, as corresponding to the unitaries
\begin{equation}\label{WHunitary}
U_{x}=X^{x_0}Z^{x_1}.
\end{equation}
Note that for the special case of $d=2$, $X$ and $Z$ are simply two of the Pauli operators and $U_x$ is effectively the four Pauli rotations. Finally, we must select Charlie's product measurements. For the special case of $d=2$, we choose them as products of the three Pauli observables, namely
\begin{align}\label{pauli}
&E_0=X\otimes X, \qquad E_1=Z\otimes Z, \quad\text{and} \quad E_2=XZ\otimes XZ,
\end{align}
with $M_{c|z}=\frac{1}{2}(\openone+(-1)^c E_z)$. Beyond dimension two, following Eq.~\eqref{prod}, we define the measurements as post-processings of the outcomes obtained in two separate basis measurements of systems $R$ and $C$,
\begin{equation}\label{MUB}
M_{c|z}=\sum_{c_1,c_2=0}^{d-1} |e_{c_1,z}\rangle\langle e_{c_1,z}| \otimes |e_{c_2,z}^*\rangle\langle e_{c_2,z}^*| \delta_{c_1-c_2,c},
\end{equation}
where $*$ denotes complex conjugation and the addition in $\delta_{c_1-c_2,c}$ is taken modulo $d$. Notably, in odd prime dimensions, the local bases $\{|e_{m,z}\rangle\}$ are mutually unbiased. These are known to admit the form $\ket{e_{m,d}}=\ket{m}$ and
$|e_{m,z}\rangle=\frac{1}{\sqrt{d}}\sum_{l=0}^{d-1}\omega^{l(m+zl)}\ket{l}$ for $z\neq d$, where $\omega=e^{\frac{2\pi i}{d}}$ \cite{Wootters1989}. The unbiasedness property is not de-facto necessary for the success of the protocol, but is a convenient choice.

In order to evaluate the average success probability \eqref{aspS}, it is handy to first identify  the following relations, which can be straightforwardly verified,
\begin{align}\nonumber 
& X^t |e_{m,z}\rangle=\omega^{zt^2-tm} |e_{m-2zt,z}\rangle,  \\\nonumber
& Z^t |e_{m,z}\rangle= |e_{m+t,z}\rangle\\\nonumber
& X^t |e^*_{m,z}\rangle=\omega^{-zt^2+tm} |e^*_{m-2zt,z}\rangle,\\
& Z^t |e^*_{m,z}\rangle=|e^*_{m-t,z}\rangle,
\end{align}
valid for $z\neq d$ and integer $t$.  Using these, one straightforwardly finds that each of the functions is computed deterministically, that is $p(c|x,z) = \delta_{c,w_z}$, leading to $\mathcal{S}_d=1$.

\subsection{Bounding protocols without shared entanglement}
Next, we must determine a useful bound $\mathcal{S}_d\leq L_d$ valid for any quantum strategy without shared entanglement. Since this corresponds to bounding the expression \eqref{aspS} in a standard quantum prepare-and-measure scenario,  the correlations are given by Eq.~\eqref{PM}. The relevant quantity becomes
\begin{equation}\label{quantityeq}
\max_{\{\alpha_x\}, \{M_{c|z}\}} \frac{1}{d^2(d+1)}\sum_{x,z} \tr\left(\alpha_x M_{w_z|z}\right),
\end{equation}
where $\alpha_x$ is a $d$-dimensional state. We restrict the analysis to prime number dimensions because in these cases the conditions in Eq.~\eqref{wincond1} are particularly hard to meet without entanglement. The task at hand can be seen as an (unorthodox) variant of a quantum random access code \cite{Ambainis1999, Tavakoli2015}. The proof ideas recently developed for quantum random access codes in Ref.~\cite{farkas2023simple} can be immediately modified to obtain a general bound, $L_d$, on Eq.~\eqref{quantityeq}, namely
\begin{equation}\label{Ld}
L_d=\frac{1}{d}\left(1+\frac{d-1}{\sqrt{d+1}}\right),
\end{equation}
for prime $d$. The derivation is detailed in  Appendix~\ref{App:firstgame} and it is based on analysing operator norms for sums of the measurement operators. Regardless of the protocol used, the observation of $\mathcal{S}_d>L_d$ implies the certification of entanglement. The bound $L_d$ is typically not tight (except for $d=2$), i.e. it does not equal the value defined in \eqref{quantityeq}. The reason for this becomes apparent in Appendix~\ref{App:firstgame}, where both operator norm inequalities and concavity inequalities are employed, the saturation of which is not guaranteed in general. Nevertheless, to give an indication of how close to optimal the bound is, we have numerically optimised the argument in \eqref{quantityeq} over the set of quantum states and measurements. For $d=3,5,7$ we obtain the lower bounds $0.6616$, $0.5121$ and $0.4233$ respectively, which can be compared to the upper bounds $0.6667$, $0.5266$ and $0.4459$ obtained respectively from \eqref{Ld}.  We note that numerical techniques likely can be used to improve the bound \eqref{Ld}, but only for specific values of $d$ \cite{Rosset2019}.

Even though the bound \eqref{Ld} is not generally tight, it is good enough to reveal the qualitative abilities of product measurements in a dimension-scalable manner. To showcase that, we focus on the seminal isotropic state,
\begin{equation}
\rho_v^\text{iso}=v\phi^+_d+\frac{1-v}{d^2}\openone,
\end{equation}
with visibility $v\in[0,1]$. Thus, when running the strategy from the previous section, we compute the smallest visibility for which the state produces a value of $\mathcal{S}_d$ that exceeds the limit in Eq.~\eqref{Ld}. For comparison, the isotropic state is known to be entangled if and only if $v>\frac{1}{d+1}$ \cite{Horodecki1996}.

\begin{corollary}\label{prop1}
For every prime dimension $d$,  entanglement certification in the EAPM scenario with product measurements is possible for the isotropic state when
\begin{equation}\label{vlim}
v>\frac{1}{\sqrt{d+1}}.
\end{equation}
\end{corollary}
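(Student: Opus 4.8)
The plan is to exploit the linearity of the average success probability $\mathcal{S}_d$ in the shared state, combined with the two facts already established in the preceding subsections: that the maximally entangled state $\phi^+_d$ achieves $\mathcal{S}_d=1$ under the proposed protocol, and that no entanglement-free strategy can exceed $L_d$ of Eq.~\eqref{Ld}. Since Alice's CPTP maps and the Born rule are both linear, and the isotropic state decomposes as $\rho_v^\text{iso}=v\,\phi^+_d+(1-v)\,\tfrac{\openone}{d^2}$, the value of $\mathcal{S}_d$ on $\rho_v^\text{iso}$ will be the corresponding convex combination of its values on the maximally entangled component and on the maximally mixed component.

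First I would run the very same protocol --- the Weyl--Heisenberg encodings $U_x$ and the mutually unbiased product measurements $M_{c|z}$ of Eq.~\eqref{MUB} --- on the maximally mixed part $\tfrac{\openone}{d^2}$. Because $U_x\tfrac{\openone}{d}U_x^\dagger=\tfrac{\openone}{d}$, Alice's map leaves this component invariant, so the state reaching Charlie is $\tau_x^{RC}=\tfrac{\openone}{d^2}$ independently of $x$. The probabilities then reduce to $p(c|x,z)=\tfrac{1}{d^2}\tr(M_{c|z})$, and a short count of the pairs $(c_1,c_2)$ satisfying $c_1-c_2\equiv c\pmod d$ gives $\tr(M_{c|z})=d$, hence $p(c=w_z|x,z)=\tfrac{1}{d}$ for every $x,z$. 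The maximally mixed state therefore contributes $\mathcal{S}_d=\tfrac{1}{d}$, i.e.~the uniformly-random guessing value for a $d$-valued output.

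Combining the two components by linearity yields $\mathcal{S}_d(\rho_v^\text{iso})=v\cdot 1+(1-v)\cdot\tfrac{1}{d}=\tfrac{1+(d-1)v}{d}$. Imposing the certification criterion $\mathcal{S}_d>L_d$ and cancelling the common factor $\tfrac{1}{d}$ leaves $(d-1)v>\tfrac{d-1}{\sqrt{d+1}}$; dividing by $d-1>0$ reproduces exactly the claimed threshold $v>\tfrac{1}{\sqrt{d+1}}$.

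I expect no serious obstacle here: once the two extreme cases are in hand, the statement is essentially a one-line convexity computation. The only point demanding care is the evaluation of the maximally-mixed contribution --- one must confirm that the product structure together with the modular post-processing $\delta_{c_1-c_2,c}$ genuinely randomises the output uniformly, so that the effective noise floor is precisely $1/d$ rather than some larger baseline. Since this value coincides with the guessing probability for a $d$-outcome function, the claimed visibility threshold follows directly.
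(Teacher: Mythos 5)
Your proposal is correct and follows essentially the same route as the paper: the corollary is obtained by running the protocol of section~\ref{sec1} on $\rho_v^\text{iso}$, using linearity to get $\mathcal{S}_d = v + (1-v)/d$ (the maximally mixed component yielding the uniform value $1/d$ since $\tr(M_{c|z})=d$), and comparing against the bound $L_d$ of Eq.~\eqref{Ld}. Your explicit verification of the $1/d$ noise floor via the count of pairs $(c_1,c_2)$ with $c_1-c_2\equiv c \pmod d$ is exactly the ``simple calculation'' the paper leaves implicit.
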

This exhibits an inverse-square-root scaling in the dimension parameter, thus showing that product measurements become increasingly good at certifying the entanglement. In particular, for $d=2$, it reduces to $v>1/\sqrt{3}$, which significantly improves on previous protocols \cite{Piveteau2022}, and happens to equal the exact threshold for steerability of $\rho_v^\text{iso}$ for the same number (three) of measurements \cite{Bavaresco2017}. Moreover, for prime $d$, Eq.~\eqref{vlim} exactly matches the bound for steerability of $\rho_v^\text{iso}$ under $d+1$ mutually unbiased bases obtained from the steering inequality of Ref.~\cite{Marciniak2015}.

\section{No entanglement advantage without steering}\label{sec2}
It is not a coincidence that our above scheme happens to give critical visibilities that closely parallel results known for steering. As we now show, the above results are examples of saturation (or near saturation) of a more fundamental limitation that applies to any protocol in the EAPM scenario using adaptive product measurements. 

\begin{proposition}\label{prop2}
Let  $\rho_{AC}$ be any entangled state that is not steerable from $C$ to $A$. Then, any probability distribution in the EAPM scenario obtained from adaptive product measurements can be simulated in a quantum model with shared classical randomness.
\end{proposition}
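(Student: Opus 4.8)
The plan is to trade the shared entanglement for shared classical randomness, turning the entanglement-assisted protocol into an ordinary prepare-and-measure model of the form \eqref{PM}. The first step is to move Alice's encoding off the state and onto the measurement via the Heisenberg picture. Writing the (non-adaptive) product measurement as $M^{RC}_{c|z}=\sum_{c_1,c_2}p(c|c_1,c_2)N^R_{c_1|z}\otimes N^C_{c_2|z}$ and using the adjoint identity $\tr_R\left(\Lambda_x[\sigma]\,N^R_{c_1|z}\right)=\tr_A\left(\sigma\,\Lambda_x^\dagger[N^R_{c_1|z}]\right)$, the correlations become
\[
p(c|x,z)=\sum_{c_1,c_2}p(c|c_1,c_2)\,\tr_{AC}\left(\rho_{AC}\left(\Lambda_x^\dagger[N^R_{c_1|z}]\otimes N^C_{c_2|z}\right)\right).
\]
This isolates the sole place where entanglement is used, namely the operator $N^C_{c_2|z}$ acting on Charlie's half of $\rho_{AC}$.

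Next I would trace out system $C$ first, which exposes Charlie's steering assemblage on Alice's system, $\sigma^A_{c_2|z}=\tr_C\left(\rho_{AC}\left(\openone^A\otimes N^C_{c_2|z}\right)\right)$. Because $\rho_{AC}$ is unsteerable from $C$ to $A$, there exists a single local-hidden-state ensemble $\{q(\lambda),\sigma_\lambda\}$, fixed once and for all independently of Charlie's measurement, together with response functions such that $\sigma^A_{c_2|z}=\int d\lambda\,q(\lambda)\,p(c_2|z,\lambda)\,\sigma_\lambda$. Substituting this decomposition and pushing $\Lambda_x$ back onto the hidden states gives
\[
p(c|x,z)=\int d\lambda\,q(\lambda)\sum_{c_1,c_2}p(c|c_1,c_2)\,p(c_2|z,\lambda)\,\tr_R\left(N^R_{c_1|z}\,\Lambda_x[\sigma_\lambda]\right).
\]
Since each $\alpha_{x,\lambda}:=\Lambda_x[\sigma_\lambda]$ is a bona fide $d$-dimensional state, this is exactly Eq.~\eqref{PM} supplemented by the shared variable $\lambda$: Alice prepares $\alpha_{x,\lambda}$, Charlie measures $R$ with $\{N^R_{c_1|z}\}$ to obtain $c_1$, independently samples $c_2$ from $p(c_2|z,\lambda)$, and announces $c$ with probability $p(c|c_1,c_2)$. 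This establishes the claim for non-adaptive product measurements.

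It then remains to extend the argument to the two adaptive orderings, which is where I expect the only real subtlety to arise. In the ``measure-$C$-first'' form, $M^{RC}_{c|z}=\sum_{c_1,c_2}p(c|c_1,c_2)N^R_{c_1|z,c_2}\otimes N^C_{c_2|z}$, nothing changes: since $c_2$ is already generated by the classical response function, Charlie simply samples it first and then selects the $R$-measurement $\{N^R_{c_1|z,c_2}\}$ accordingly. In the ``measure-$R$-first'' form, $M^{RC}_{c|z}=\sum_{c_1,c_2}p(c|c_1,c_2)N^R_{c_1|z}\otimes N^C_{c_2|z,c_1}$, the relevant assemblage is $\sigma^A_{c_2|(z,c_1)}$, now indexed by the enlarged, outcome-dependent setting $(z,c_1)$. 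The decisive point, which is the crux of the proof, is that unsteerability guarantees the very same ensemble $\{q(\lambda),\sigma_\lambda\}$ reproduces the assemblage for every measurement Charlie could perform, including this outcome-conditioned family; only the response function $p(c_2|(z,c_1),\lambda)$ carries the setting dependence. Consequently Alice's preparations $\Lambda_x[\sigma_\lambda]$ stay well-defined and independent of Charlie's choices, and the simulation proceeds with Charlie first reading off $c_1$ from his $R$-measurement and then sampling $c_2$ from $p(c_2|(z,c_1),\lambda)$. In all three cases the outcome is a standard quantum prepare-and-measure protocol with shared classical randomness, i.e.\ precisely the entanglement-unassisted model, which completes the argument.
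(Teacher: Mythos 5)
Your proposal is correct and follows essentially the same route as the paper's proof: rewrite the correlations in terms of Charlie's steering assemblage on system $A$ (with the outcome-conditioned settings $(z,c_1)$ in the measure-$R$-first case), insert the local-hidden-state decomposition guaranteed by unsteerability from $C$ to $A$, and simulate by having Alice prepare $\Lambda_x[\sigma_\lambda]$ while Charlie classically samples $c_2$ and then $c$. The only differences are presentational --- you handle the non-adaptive case first via the adjoint map and make explicit that a single universal ensemble covers all of Charlie's measurements, whereas the paper treats the two adaptive orderings directly, with the non-adaptive case as a special instance.
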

\begin{proof}
Consider first product measurements that are adaptive from system $C$ to system $R$. The probability distribution can then be written 
\begin{equation}
p(c|x,z)=\sum_{c_1,c_2}p(c|c_1,c_2)\Tr\left(\Lambda_x[\rho_{c_2|z}]M^{R}_{c_1|z,c_2}\right),
\end{equation}
where $\rho_{c_2|z}=\Tr_C\left(\openone\otimes M^{C}_{c_2|z}\rho_{AC}\right)$ are the unnormalised states remotely prepared on $A$ by measuring $C$. If $\rho_{AC}$ is unsteerable from $C$ to $A$, there exists a  local hidden state decomposition $\rho_{c_2|z}=\sum_\lambda p(\lambda)p(c_2|z,\lambda)\tau_\lambda$ for some arbitrary-dimensional quantum states $\{\tau_\lambda\}$.  Inserting this, we obtain
\begin{equation}\label{qq4}
p(c|x,z)=\sum_{\lambda,c_1,c_2} p(\lambda)p(c|c_1,c_2) p(c_2|z,\lambda) \Tr\left(\Lambda_x[\tau_\lambda]M^{R}_{c_1|z,c_2}\right).
\end{equation}	
This can be simulated without entanglement as follows. Let Alice and Charlie share $\lambda$, with distribution $p(\lambda)$. Alice prepares $\tau_\lambda$ and  applies  $\Lambda_x$ to it, sending the  $d$-dimensional state $\Lambda_x[\tau_\lambda]$ to Charlie. He draws  $c_2$ from the distribution $p(c_2|z,\lambda)$, then  applies the measurement $\{M^{R}_{c_1|z,c_2}\}$, and lastly draws $c$ from $p(c|c_1,c_2)$.  This reproduces the distribution in Eq.~\eqref{qq4}.

The case of product measurements adaptive from system $R$ to system $C$ is similarly treated. The probability distribution becomes
\begin{equation}
p(c|x,z)=\sum_{c_1,c_2}p(c|c_1,c_2)\Tr\left(\Lambda_x[\rho_{c_2|z,c_1}]M^{R}_{c_1|z}\right),
\end{equation}
where $\rho_{c_2|z,c_1}=\Tr_C\left(\openone\otimes M^{C}_{c_2|z,c_1}\rho_{AC}\right)$ are the unnormalised states remotely prepared on $A$. The existence of a local hidden state model implies 
\begin{equation}
p(c|x,z)=\sum_{\lambda,c_1,c_2} p(\lambda)p(c|c_1,c_2) p(c_2|z,c_1,\lambda) \Tr\left(\Lambda_x[\tau_\lambda]M^{R}_{c_1|z}\right).
\end{equation}	
To simulate this distribution without entanglement, one distributes $\lambda$, let's Alice prepare $\tau_\lambda$ and run it through the map $\Lambda_x$. Charlie first measures the message, then uses the outcome $c_1$ to draw $c_2$ from $p(c_2|z,c_1,\lambda)$ and finally draws $c$ from $p(c|c_1,c_2)$.
\end{proof} 

A noteworthy corollary of this argument is that,  product measurements adaptive from $C$ to $R$, in an EAPM scenario with $N$ inputs for Charlie, one-way steerability under just $N$ measurements is necessary for an entanglement-advantage. This makes a significant difference since steerability under a limited number of measurements is known to be considerably more constrained than steerability under unboundedly many measurements \cite{Bavaresco2017}.  In view of this, the scheme from the previous section, which led to Corollary~\ref{prop1} via independent product measurements, is optimal for $d=2$ since it coincides with the steering bound of $\rho_v^\text{iso}$ under three measurements.  For larger $d$, it is unlikely that our result from the previous section can be much improved, because of the steering results for $N=d+1$ bases in \cite{Marciniak2015, Designolle2019}. However, by employing potentially unboundedly many measurements (instead of $d+1$ as in our case), it may be possible to approach the ultimate steering limit  \cite{Wiseman2007} on the parameter $v$.

Proposition~\ref{prop2} provides a fundamental limitation on the abilities of product measurements in the EAPM scenario. Although we already found that significantly noise-tolerant entanglement certification is possible, it is impossible to certify any state which is entangled but not steerable. Therefore, in what follows, we go beyond the EAPM scenario to show that this obstacle can be overcome, allowing for even stronger entanglement certification.

\section{The symmetric EAPM scenario}\label{sec3}
In order to circumvent the limitation on product measurement schemes imposed by Proposition~\ref{prop2}, we consider an extended version \cite{Abbott2018, Piveteau2023} of the original EAPM scenario which we refer to as the symmetric EAPM scenario. The extension is modest in terms of an implementation perspective and is conceptually natural. In the original EAPM scenario, classical information is only encoded into half the entangled state, namely by Alice, into system $A$. In the symmetric EAPM scenario, we want to encode classical information also in the other half of the entangled state. To make this possible, we add a third party, Bob, who selects an input $y$ and encodes it into the second source particle before relaying it to Charlie. See illustration in Fig~\ref{Fig_scenario2}.

Let us now write the state as $\rho_{AB}$, distributed to Alice and Bob. They each select $x$ and $y$ and perform CPTP maps $\Lambda_x^{A\rightarrow R_1}$ and $\Gamma_y^{B\rightarrow R_2}$, with the output systems $R_1$ and $R_2$ each being of dimension $d$. These are now separate, but potentially entangled, messages. The measurements of Charlie, $\{M_{c|z}^{R_1R_2}\}$, are applied jointly to both messages, leading to the quantum correlations 
\begin{equation}\label{entcorr}
p(c|x,y,z)=\tr\left((\Lambda^{A\rightarrow R_1}_x\otimes\Gamma^{B\rightarrow R_2}_y)[\rho_{AB}] M^{R_1R_2}_{c|z}\right).
\end{equation}
We remark that all parties can also share classical randomness, which can be included in the state $\rho_{AB}$.

\begin{figure}
	\centering
	\includegraphics[width=\columnwidth]{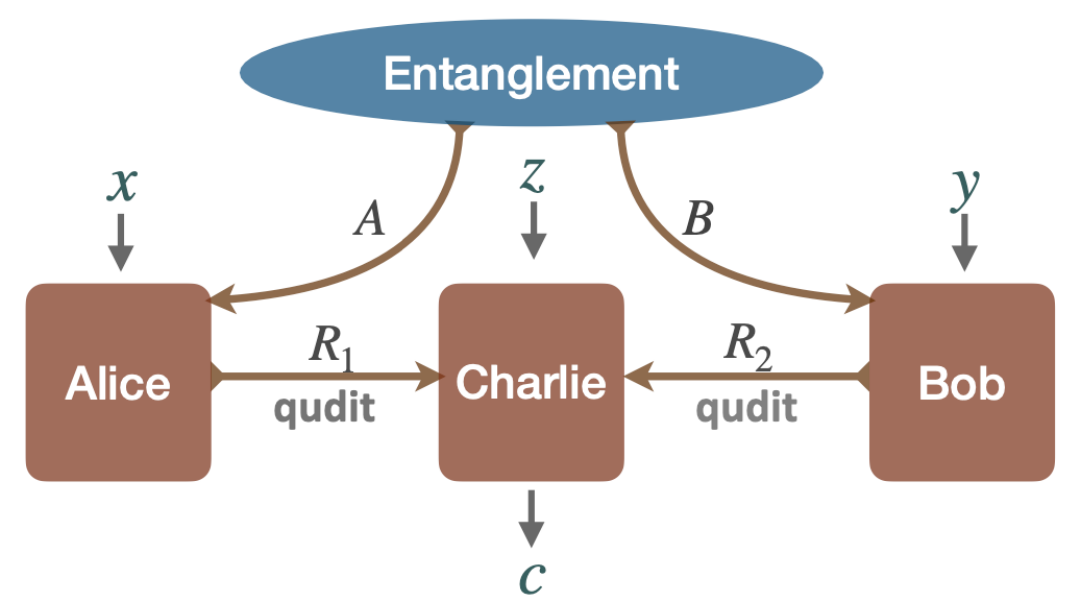}
	\caption{Symmetric entanglement-assisted prepare-and-measure scenario between the senders (Alice, Bob) and the receiver (Charlie). The information, ($x$,$y$), is encoded into the shares of the entangled state.}\label{Fig_scenario2}
\end{figure}

Again, we are interested in how protocols using shared entanglement and product measurements can outperform protocols using no shared entanglement. The correlations from the latters are described as
\begin{equation}\label{symPM}
p(c|x,y,z)=\tr\left(\alpha_x\otimes \beta_y M_{c|z}\right), 
\end{equation}
where $\alpha_x$ and $\beta_y$ are $d$-dimensional states sent from Alice and Bob, respectively, to Charlie. Note that in contrast to the original EAPM scenario, the measurement $\{M_{c|z}\}$ can now be entangled.

A direct inspection shows that the argument used to arrive at Proposition~\ref{prop2} cannot be repeated for the symmetric EAPM scenario. Indeed, the argument relies on the fact that Alice's operations do not influence the other particle, making one-way steering relevant. The counterparts to these states arriving to Charlie are now influenced by Bob. We shall see that this is not a superficial observation; the symmetric EAPM scenario can indeed certify unsteerable entanglement. To show this, let us focus on qubit systems and the following scheme.

Alice and Bob each select two bits, $x\in(x_0,x_1)\in\{0,1\}^2$ and $y\in(y_0,y_1)\in\{0,1\}^2$. Charlie selects one of three inputs $z\in\{0,1,2\}$, each with binary outputs $c\in\{0,1\}$. The aim is for Charlie to compute a binary function for each $z$, specifically the functions 
\begin{align}\nonumber
&z=0: \qquad w_0=x_0+ y_0 \\\nonumber
&z=1: \qquad w_1=x_1+ y_1 \\\label{wincond2}
&z=2: \qquad w_2=x_0+ x_1+ y_0+ y_1,
\end{align}
computed modulo $2$. The average success probability becomes
\begin{equation}\label{R2}
\mathcal{R}_2=\frac{1}{48}\sum_{x,y,z}p(c=w_z|x,y,z).
\end{equation}

In analogy with the discussion in section~\ref{sec1}, this task can be performed deterministically with shared entanglement and product measurements. In complete analogy with the protocol in section~\ref{sec1}, we let $\rho_{AB}=\phi^+_2$ and we let Charlie perform the separate Pauli observables in Eq.~\eqref{pauli}. Alice and Bob both select among the same four Pauli unitaries, namely $U_x$ and $U_y$, as given in Eq.~\eqref{WHunitary}. Evaluating Eq.~\eqref{R2} then gives $\mathcal{R}_2=1$.

The key question is to determine the largest value of $\mathcal{R}_2$ achievable in a quantum model without shared entanglement.  Consider first a classical protocol, in which $\alpha_x$ and $\beta_y$ in Eq.~\eqref{symPM} are all diagonal in the same basis. An optimal strategy is for Alice and Bob to simply send $x_0$ and $y_0$ respectively, leading to a deterministic output for $z=0$ but random outputs for $z\in\{1,2\}$, and thus a value of $\mathcal{R}_2=\frac{2}{3}$. 
We prove in Appendix~\ref{App:qubitproof} that this cannot be improved in a generic quantum protocol without shared entanglement, i.e.~any model of the form \eqref{symPM} obeys $\mathcal{R}_2\leq \frac{2}{3}$. Consequently, any quantum-over-classical advantage in the scheme must be due to entanglement. Notably, the same is not true for the scheme presented in section~\ref{sec1}; there the classical limit can be overcome using quantum communication without entanglement, and then be further enhanced by adding entanglement. 

We remark that the proof presented in Appendix~\ref{App:qubitproof} applies more generally. It can be used to bound the average success probability in any input-output scenario in which  Charlie has binary outputs and the winning conditions are XOR between balanced functions of Alice's input and Bob's input. Nevertheless, we focus on the specific case in Eq.~\eqref{wincond2} because of its relevance for certifying the entanglement of isotropic states\footnote{Note that the qubit isotropic state is up to local unitaries equivalent to the qubit Werner state.}. Indeed, a simple calculation now shows that every entangled isotropic state is certified, i.e.~$\mathcal{R}_2>\frac{2}{3}$ when $v>\frac{1}{3}$. In contrast, the state is steerable under generic projective measurements only when $v>\frac{1}{2}$ \cite{Wiseman2007}. Notably, this result completely solves the main open problem raised in Ref.~\cite{Piveteau2022}. 

More generally,  consider the so-called maximally entangled fraction of $\rho_{AC}$,
\begin{equation}
\text{EF}_d(\rho)=\max_{\Lambda_1,\Lambda_2} \bracket{\phi^+_d}{(\Lambda_1 \otimes \Lambda_2) [\rho]}{\phi^+_d}
\end{equation}
where $\Lambda_1$ and $\Lambda_2$ are CPTP maps with $d$-dimensional output spaces. A non-trivial maximally entangled fraction corresponds to $\text{EF}_d(\rho)> \frac{1}{d}$, and it is the key parameter for quantifying fidelity-based quantum teleportation \cite{Horodecki1999}. We find that it gives a sufficient condition for whether a state can be certified via product measurements in our scheme.
\begin{proposition}\label{prop3}
Every state  $\rho_{AB}$ with a non-trivial qubit maximally entangled fraction can be certified in the symmetric EAPM scenario using product measurements. In particular, it can achieve the value 
	\begin{equation}\label{qubitSF}
\mathcal{R}_2=\frac{1}{3}+\frac{2}{3}{\rm EF}_2(\rho_{AB}).
\end{equation}	
Moreover, this value is optimal when $\rho_{AB}$ is a pure two-qubit state.
\end{proposition}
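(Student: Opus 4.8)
The plan is to treat the two claims separately: first the achievability of the value \eqref{qubitSF} for an arbitrary $\rho_{AB}$ with non-trivial qubit fraction, then its optimality for pure two-qubit states.

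For achievability I would have Alice and Bob preface the ideal qubit protocol by the local maps that realise the maximally entangled fraction. Concretely, let $\Lambda_1^\star,\Lambda_2^\star$ attain the maximum in $\text{EF}_2(\rho_{AB})$, and set Alice's channel to $\Lambda_x=\mathcal U_x\circ\Lambda_1^\star$ and Bob's to $\Gamma_y=\mathcal U_y\circ\Lambda_2^\star$, where $\mathcal U_x[\cdot]=U_x(\cdot)U_x^\dagger$ with $U_x=X^{x_0}Z^{x_1}$, while Charlie keeps the product Pauli measurements \eqref{pauli}. Writing $\tilde\rho=(\Lambda_1^\star\otimes\Lambda_2^\star)[\rho_{AB}]$ so that $\bracket{\phi^+_2}{\tilde\rho}{\phi^+_2}=\text{EF}_2(\rho_{AB})$, the only computation needed is the Pauli conjugation rule $U_x^\dagger X U_x=\pm X$ and $U_x^\dagger Z U_x=\pm Z$, whose signs are exactly $(-1)^{f_z(x)}$, where $w_z=f_z(x)+g_z(y)$ splits each winning function in \eqref{wincond2} into Alice's and Bob's balanced parts. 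These signs cancel the factor $(-1)^{w_z}$ in every success probability, so that $p(c=w_z|x,y,z)=\tfrac12\big(1+\tr(\tilde\rho E_z)\big)$ independently of $x,y$. Summing over the $16$ pairs $(x,y)$ and the three settings $z$, and using $E_0+E_1+E_2=X\otimes X+Z\otimes Z-Y\otimes Y=4\phi^+_2-\openone$, collapses \eqref{R2} to $\mathcal R_2=\tfrac12+\tfrac16\big(4\,\text{EF}_2(\rho_{AB})-1\big)=\tfrac13+\tfrac23\text{EF}_2(\rho_{AB})$. Since a non-trivial fraction means $\text{EF}_2>\tfrac12$, this exceeds the entanglement-free bound $\tfrac23$ of Appendix~\ref{App:qubitproof}, giving certification.

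For optimality on pure states $\rho_{AB}=\ketbra{\psi}{\psi}$, the plan is to bound $\mathcal R_2$ over all encodings and all product measurements. Passing Charlie's local $\pm1$ observables $A_z,B_z$ to the Heisenberg picture gives $\mathcal R_2=\tfrac12+\tfrac1{96}\sum_z\bracket{\psi}{\mathcal A_z\otimes\mathcal B_z}{\psi}$, with $\mathcal A_z=\sum_x(-1)^{f_z(x)}\Lambda_x^\dagger[A_z]$ and $\mathcal B_z$ defined analogously; the factorisation $(-1)^{w_z}=(-1)^{f_z(x)}(-1)^{g_z(y)}$ is precisely what makes the sum over $(x,y)$ separate into the tensor product $\mathcal A_z\otimes\mathcal B_z$. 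I would then argue that, for pure inputs, the linear dependence of this objective on each channel and each measurement operator lets me restrict to projective measurements and to unitary encodings (the latter requiring a short extremality argument), after which everything lives on the Bloch sphere: each term becomes $\vec M_z^{\,\mathsf T}T^\psi\vec N_z$ with the pure-state correlation matrix $T^\psi=\mathrm{diag}\big(2\sqrt{\lambda_0\lambda_1},-2\sqrt{\lambda_0\lambda_1},1\big)$ and $\vec M_z=\sum_x(-1)^{f_z(x)}R_x\vec a_z$ a signed sum of four rotated unit vectors. The final step is to show $\max\sum_z\vec M_z^{\,\mathsf T}T^\psi\vec N_z=16\big(1+4\sqrt{\lambda_0\lambda_1}\big)$, which through $\text{EF}_2(\psi)=\tfrac12+\sqrt{\lambda_0\lambda_1}$ reproduces \eqref{qubitSF}.

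The main obstacle is precisely this last optimisation. A term-by-term norm bound only gives $\|\mathcal A_z\|,\|\mathcal B_z\|\le4$ and hence the trivial $\mathcal R_2\le1$; the non-trivial, $\lambda$-dependent ceiling must exploit that the \emph{same} four encoding maps are shared across the three settings $z$, so the vectors $\vec M_z$ (and $\vec N_z$) cannot be chosen independently. I expect to control this by combining the $(\mathbb Z_2)^2$ sign structure of the three balanced functions with the anisotropy of $T^\psi$ — the $x,y$-components are damped by $2\sqrt{\lambda_0\lambda_1}\le1$ while the $z$-component is not — to prove that the aligned Pauli choice of section~\ref{sec3} is optimal. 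A secondary technical point is justifying the reduction to unital encodings, since a generic CPTP map can leak an identity component into $\mathcal A_z$; for pure inputs this should be removable by the extremality argument, and the pure-state identity $\text{EF}_2=\tfrac12+\sqrt{\lambda_0\lambda_1}$ (local unitaries suffice) then closes the loop.
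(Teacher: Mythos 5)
Your achievability half coincides with the paper's own proof: pre-compose the channels attaining $\mathrm{EF}_2(\rho_{AB})$, run the Pauli-encoding protocol, and use $E_0+E_1+E_2=4\phi^+_2-\openone$; the arithmetic $\frac{1}{2}+\frac{1}{6}\left(4\,\mathrm{EF}_2-1\right)=\frac{1}{3}+\frac{2}{3}\mathrm{EF}_2$ is correct (modulo a harmless relabelling: with $U_x=X^{x_0}Z^{x_1}$ one has $U_x^\dagger X U_x=(-1)^{x_1}X$ and $U_x^\dagger Z U_x=(-1)^{x_0}Z$, so $X\otimes X$ computes $x_1+y_1$ and $Z\otimes Z$ computes $x_0+y_0$; the pairing of settings to winning functions must be permuted accordingly).

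The optimality half, however, has a genuine gap, which you yourself flag: the inequality $\max\sum_z \vec{M}_z^{\mathsf{T}}T^\psi\vec{N}_z\le 16\left(1+4\sqrt{\lambda_0\lambda_1}\right)$ \emph{is} the entire content of the claim, and you do not prove it --- everything preceding it is reformulation, and the term-by-term bound gives only the trivial $\mathcal{R}_2\le 1$, as you note. Moreover, two of your intended reductions are problematic. First, the ``short extremality argument'' to unitary encodings is not available: linearity in each $\Lambda_x$ lets you pass to extreme points of the CPTP set, but extremal qubit channels are not all unitary (amplitude damping is extremal and non-unital), so you cannot assume the encodings are unital, and a non-unital map genuinely injects an identity component into the Heisenberg-picture operators. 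Second, by fixing Charlie's measurement to a product of local observables $A_z\otimes B_z$ you would at best prove optimality within product-measurement strategies, whereas the paper's Appendix~\ref{App:SFqubit} establishes the bound for arbitrary, possibly entangled, $M_{c|z}$ --- which is what optimality of the value means in this scenario.

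For comparison, the paper closes both holes without any extremality reduction. It writes the pure state in Schmidt form and expands $\Phi_{AB}(\theta)$ in the Pauli basis as a sum of four \emph{product} operators with nonnegative weights $\tau=\frac{1}{4}\left(\cos^2\theta,\sin^2\theta,\sin 2\theta,\sin 2\theta\right)$ satisfying $\sum_i\tau_i=\mathrm{EF}_2-\frac{1}{4}$. Arbitrary CPTP maps are pushed through this decomposition using only two facts: a channel maps $\frac{1}{2}\left(\openone+\vec{m}\cdot\vec{\sigma}\right)$ to a state, and it maps the traceless part $\vec{m}\cdot\vec{\sigma}$ with $\abs{\vec{m}}\le 1$ to $\vec{s}\cdot\vec{\sigma}$ with $\abs{\vec{s}}\le 1$ (a short linearity argument). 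The balancedness of $g_z,h_z$ then annihilates every identity component in the sum over $(x,y)$ --- this is precisely how non-unitality is rendered harmless --- leaving four terms, each of the form $\sum_z\tr\left(M_{0|z}\,O^A_{z,i}\otimes O^B_{z,i}\right)$, which is exactly the quantity already bounded by $8$ in Appendix~\ref{App:qubitproof} (the positive-eigenvalue step there holds for arbitrary $M_{0|z}$, followed by Cauchy--Schwarz and the Gram-matrix SDP with $\tilde{\eta}=2$). This gives $\mathcal{R}_2\le\frac{1}{2}+\frac{2}{3}\sum_i\tau_i=\frac{1}{3}+\frac{2}{3}\mathrm{EF}_2$: the cross-setting correlation that you identify as the main obstacle is handled by recycling the entanglement-free bound termwise, not by a new Bloch-geometry optimisation, and this is the missing idea you would need to supply to complete your route.
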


\begin{proof} Here, we show only that Eq.~\eqref{qubitSF} is attainable, with remaining details given in Appendix~\ref{App:SFqubit}.  Upon receiving the shares of $\rho_{AC}$, let Alice and Bob first apply some arbitrary CPTP maps $\Lambda_1$ and $\Lambda_2$ respectively, whose output systems are $d$-dimensional. Subsequently, they each implement the previously given protocol, i.e.~they perform unitaries $U_x$ and $U_y$ respectively and Charlie measures the observables in Eq.~\eqref{pauli}. We can express the average success probability as  
	\begin{equation}\label{stepres}
	\mathcal{R}_2=\frac{1}{2}+\frac{1}{96} \tr\left((\Lambda_1 \otimes \Lambda_2) [\rho]  \sum_z \mathcal{B}^{(1)}_z \otimes \mathcal{B}^{(2)}_z   \right),
	\end{equation}
	where
	\begin{align}\nonumber
	&\mathcal{B}_0^{(1)}=\sum_{x}(-1)^{x_0}U_{x}^\dagger X U_{x}=4X \\\nonumber
	& \mathcal{B}_1^{(1)}=\sum_{x}(-1)^{x_1}U_{x}^\dagger Z U_{x} =4Z\\
	& \mathcal{B}_2^{(1)}=\sum_{x}(-1)^{x_0+x_1}U_{x}^\dagger XZ U_{x}=4XZ
	\end{align}
	 The right-hand-sides are obtained after some simplifications.  Due to the protocol's symmetry, we have $\mathcal{B}_z^{(1)}=\mathcal{B}_z^{(2)}$. One then observes that, 
	\begin{equation}
	\sum_z \mathcal{B}^{(1)}_z \otimes \mathcal{B}^{(2)}_z	=16(4\phi^+_2 -\openone).
	\end{equation}
	Inserted into Eq.~\eqref{stepres} and allowing for an optimisation over the channels $\Lambda_1$ and $\Lambda_2$, we obtain Eq.~\eqref{qubitSF}.
\end{proof}
Thus, a state's usefulness in teleportation is a sufficient condition for certification in the symmetric EAPM scenario. Notably,  many states with a non-trivial maximally entangled fraction do not admit any steering. The most  immediate example is  the isotropic state in the interval $\frac{1}{3}<v<\frac{1}{2}$ \cite{Wiseman2007}. 
We remark that we have numerically explored the trade-off between $\mathcal{R}_2$ and the set of (mixed) two-qubit states with a bounded maximally entangled fraction, and we again find that Eq.~\eqref{qubitSF} is the optimal value of $\mathcal{R}_2$ for every such state.

\section{Towards high-dimensional schemes}\label{sec4}
Having found that the symmetric EAPM scenario can for some classes of states enable even optimal entanglement advantages under product measurements, we proceed with investigating whether the same is possible also for high-dimensional systems. To this end, we draw inspiration from the scheme in section~\ref{sec1} and extend it to the symmetric EAPM scenario. 

Let $d$ be an odd prime number. Let Alice and Bob each select one of $d^2$ inputs, $x=(x_0,x_1)\in\{0,\ldots,d-1\}^2$ and $y=(y_0,y_1)\in\{0,\ldots,d-1\}^2$. Charlie selects $z\in\{0,\ldots,d\}$ and outputs $c\in\{0,\ldots,d-1\}$. The winning conditions correspond to computing the following functions
\begin{align}\nonumber
&z\neq d:  &&w_z=x_1+y_1-2z(x_0-y_0) \mod{d} \\\label{wincond3}
&z=d: && w_d=x_0-y_0\mod{d}.
\end{align}
The average success probability of computing these functions is
\begin{equation}
\mathcal{R}_d=\frac{1}{d^4(d+1)}\sum_{x,y,z}p(c=w_z|x,y,z). \label{RdsymEAPM}
\end{equation}
Notice that for $d=2$, this reduces to the qubit scheme from section~\ref{sec3}.

A protocol based on product measurements, analogous to that used in section~\ref{sec1}, can deterministically compute each of the winning functions. That is, choose $\rho_{AB}=\phi^+_d$, choose Alice's and Bob's unitaries as in Eq.~\eqref{WHunitary} and choose Charlie's measurements as in Eq.~\eqref{MUB}, with the $d+1$ mutually unbiased bases $\{\vert e_{m,z}\rangle\}$. One then calculates that $\mathcal{R}_d=1$.

Consider now a fully classical model. A simple protocol is, just like for $\mathcal{R}_2$, to send e.g.~$x_0$ and $y_0$ to Charlie and thus let him output correctly ($c=w_z$) when $z=d$ but output at random when $z\neq d$. This leads to $\mathcal{R}_d=2/(d+1)$. One may wonder weather there exist a quantum strategy without entanglement that improves this bound.
Nonetheless, in analogy with what was proven for the qubit case in section~\ref{sec3}, we are unable to find any such protocol.  
Particularly, when employing the strategy mentioned above, that was optimal for shared entanglement, but now to  the case without shared entanglement, we get the classical score ${\cal R}_d = 2/(d+1)$---see Appendix~\ref{App:SFqudit} for more details.
While for $d=2$ we proved analytically that the bound cannot be improved, in Appendix~\ref{App:qubitproof}, for the cases of $d=3$ and $d=5$, we have used a numerical search based on the see-saw method \cite{tavakoli2023semidefinite} to optimise $\mathcal{R}_d$ over the operations of Alice, Bob and Charlie without shared entanglement. Specifically, we optimise $\mathcal{R}_d$ in Eq.~(\ref{RdsymEAPM}) for all possible correlations according to Eq.~(\ref{symPM}) for any set of local quantum states $\alpha_x$, $\beta_y$ in Alice and Bob’s laboratories respectively, and measurements $M_{c|z}$ in Charlie’s laboratory. The optimization is rendered as a semidefinite program with variables iterating in a see-saw manner. That is, we begin sampling random quantum states $\alpha_x$ and $\beta_y$ with dimension $d$ and optimise $\mathcal{R}_d$ for any measurements of Charlies, $M_{c|z}$. The optimal $M_{c|z}$ are stored and $\mathcal{R}_d$ is again optimised but now over all possible states of Alice, $\alpha_x$. Again, the optimal $\alpha_x$ are stored and now the optimisation runs over all possible states of Bob, $\beta_y$. This routine of three separate optimisations is then repeated until the estimated value of $\mathcal{R}_d$ converges (within a precision factor of  $10^{-4}$). In over $300$ separate trials for each $d$, we have without exception found the obtained value of $\mathcal{R}_d$, coincides with the classical bound. On this basis, we make the following conjecture.

\begin{conjecture}\label{Conj}
	For every odd prime $d$, the largest average success probability achievable in a quantum model without shared entanglement is
	\begin{equation}\label{conj}
	\mathcal{R}_d =\frac{2}{d+1}.
	\end{equation}
\end{conjecture}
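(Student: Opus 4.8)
The lower bound $\mathcal{R}_d\ge\frac{2}{d+1}$ is already secured by the classical protocol described above, so the content is the matching upper bound over all entanglement-free strategies \eqref{symPM}. The plan is to convert this into a single operator inequality by a discrete Fourier transform, to extract an exact sum rule special to prime dimension, and then to close the estimate with either an uncertainty relation or a symmetry-reduced dual certificate.

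First I would use that each winning function splits additively as $w_z=f_z(x)+g_z(y)\bmod d$, with $f_z(x)=x_1-2zx_0$, $g_z(y)=y_1+2zy_0$ for $z\neq d$ and $f_d(x)=x_0$, $g_d(y)=-y_0$, every $f_z$ being balanced in $x$ and every $g_z$ balanced in $y$. Writing $\delta_{c,w_z}=\frac1d\sum_k\omega^{k(c-w_z)}$ with $\omega=e^{2\pi i/d}$ and inserting the product states of Eq.~\eqref{symPM}, the sum over $(x,y)$ factorises into separate Alice- and Bob-operators,
\begin{equation}
\mathcal{R}_d=\frac1d+\frac{1}{d^5(d+1)}\sum_z\sum_{k=1}^{d-1}\tr\left((A_k^z\otimes B_k^z)\hat M_k^z\right),
\end{equation}
where $A_k^z=\sum_x\omega^{-kf_z(x)}\alpha_x$, $B_k^z=\sum_y\omega^{-kg_z(y)}\beta_y$ and $\hat M_k^z=\sum_c\omega^{kc}M_{c|z}$. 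Since $\frac{2}{d+1}=\frac1d+\frac{d-1}{d(d+1)}$, the conjecture reduces to the bilinear operator inequality
\begin{equation}
\sum_z\sum_{k=1}^{d-1}\tr\left((A_k^z\otimes B_k^z)\hat M_k^z\right)\le d^4(d-1).
\end{equation}

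The crucial input, and the step where primality enters, is a sum rule over all $d+1$ settings. Because $d$ is an odd prime, any two distinct inputs $x\neq x'$ satisfy $f_z(x)=f_z(x')$ for exactly one value of $z$: if $x_0\neq x_0'$ the unique solution is $z=(x_1-x_1')[2(x_0-x_0')]^{-1}$ in $\mathbb{Z}_d$, and otherwise only $z=d$ works. A short computation then collapses the double sum to
\begin{equation}
\sum_z\sum_{k=1}^{d-1}(A_k^z)^\dagger A_k^z=d^2\sum_x\alpha_x^2-\Big(\sum_x\alpha_x\Big)^2,
\end{equation}
whence, using $\tr(\alpha_x^2)\le1$ and $\|\sum_x\alpha_x\|_2^2\ge d^3$, one obtains $\sum_z\sum_{k=1}^{d-1}\|A_k^z\|_2^2\le d^3(d-1)$, and identically for Bob. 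For Charlie, the POVM constraint gives the per-setting budget $\sum_{k=1}^{d-1}\|\hat M_k^z\|_2^2\le d^2(d-1)$. These three inputs are already sharp on the optimal classical strategy, where all weight sits on a single $z$ and a Cauchy--Schwarz over $k$ returns exactly $d^4(d-1)$.

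The difficulty, and the reason the result is stated as a conjecture, is that applied globally over all $(z,k)$ the same Cauchy--Schwarz overshoots by a factor $\sqrt{d}$, giving only $\mathcal{R}_d\lesssim1/\sqrt{d}$---the same order as the trivial single-sender bound obtained by absorbing Bob into the receiver, where $N^y_{c|z}:=\tr_{R_2}((\openone\otimes\beta_y)M_{c|z})$ is a POVM and relabelling $c\mapsto c-g_z(y)$ reduces the problem to the game of Sec.~\ref{sec1}, yielding $\mathcal{R}_d\le L_d$. To reach $\frac{2}{d+1}$ one must rule out a quantum strategy that spreads coherently across several $z$ while keeping each $\|A_k^z\|_2^2$ large; this requires controlling the entire distribution of the Fourier weights $\{\|A_k^z\|_2^2\}_{z,k}$---an uncertainty relation---rather than only their sum, on both the Alice and Bob sides simultaneously. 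I expect the most robust route to be an explicit dual certificate for the underlying semidefinite program: the game is invariant under the Weyl--Heisenberg group, so optimal dual variables may be taken covariant and should collapse to a few scalars fixed by Gauss-sum identities holding for every odd prime, with the analytic $d=2$ argument of Appendix~\ref{App:qubitproof} as the base case. Producing such a certificate uniformly in $d$, rather than prime-by-prime through the see-saw numerics, is the main obstacle.
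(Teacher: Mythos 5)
You were asked to prove a statement that the paper itself labels a \emph{Conjecture}: the paper contains no proof, only the classical protocol attaining $\mathcal{R}_d=2/(d+1)$, the analytic $d=2$ base case of Appendix~\ref{App:qubitproof}, and see-saw numerics for $d=3,5$ that matched the classical value in over $300$ trials per dimension. Your proposal is correctly calibrated to this situation: you secure the lower bound, build real analytic structure toward the upper bound, and explicitly flag where the argument fails to close instead of claiming a proof. Every step you do assert checks out. The additive splitting $w_z=f_z(x)+g_z(y)$ with balanced components is correct; the Fourier factorisation with prefactor $1/(d^5(d+1))$ and the reduction of the conjecture to $\sum_{z}\sum_{k=1}^{d-1}\tr\left((A_k^z\otimes B_k^z)\hat M_k^z\right)\le d^4(d-1)$ follow from $2/(d+1)=1/d+(d-1)/(d(d+1))$; the collision lemma (each pair $x\neq x'$ satisfies $f_z(x)=f_z(x')$ for exactly one of the $d+1$ settings, by invertibility of $2(x_0-x_0')$ modulo an odd prime) is correct and does yield the sum rule $\sum_{z}\sum_{k\geq 1}(A_k^z)^\dagger A_k^z=d^2\sum_x\alpha_x^2-\left(\sum_x\alpha_x\right)^2$, hence the budget $d^3(d-1)$ via $\tr(\alpha_x^2)\le 1$ and $\|\sum_x\alpha_x\|_2^2\ge (d^2)^2/d$; Charlie's per-setting budget $d^2(d-1)$ follows from $\sum_{k\ge 0}\|\hat M_k^z\|_2^2=d\sum_c\tr(M_{c|z}^2)$ after removing the $k=0$ term. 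I also verified your tightness claim: the classical strategy saturates all three budgets on the single setting $z=d$, with $\tr\left((A_k^d\otimes B_k^d)\hat M_k^d\right)=d^4$ for each $k\neq 0$, so your diagnosis that any first-moment (global Cauchy--Schwarz) argument necessarily overshoots to order $1/\sqrt{d}$ is accurate, not merely pessimistic.

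Relative to the paper there is thus no gap you have that the authors do not share: the missing ingredient you name---controlling the joint distribution of the Fourier weights $\{\|A_k^z\|_2^2\}_{z,k}$ across settings, or equivalently a Weyl--Heisenberg-covariant dual certificate for the SDP uniform in $d$---is precisely the open problem, for which the paper substitutes numerics. Two of your side observations actually exceed what the paper states: the rigorous reduction $\mathcal{R}_d\le L_d$ obtained by absorbing Bob into the receiver ($N^y_{c|z}=\tr_{R_2}\left((\openone\otimes\beta_y)M_{c|z}\right)$ is a valid POVM, and the relabelling $c\mapsto c-g_z(y)$ maps the problem to the game of Section~\ref{sec1}, so Eq.~\eqref{Ld} applies) gives an unconditional entanglement-free bound of order $1/\sqrt{d}$ that appears nowhere in the paper; and the saturation analysis explains structurally why the conjectured value $2/(d+1)$ cannot be reached by second-moment budgets alone. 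Just state explicitly that $L_d\gg 2/(d+1)$, so this bound leaves the conjecture open for every odd prime---as does the paper.
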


Interestingly, if the conjecture is true, it implies that the strong entanglement advantages previously proven for qubit systems can be extended to high-dimensional systems. In Appendix~\ref{App:SFqudit}, we prove that the connection between the maximally entangled fraction and the average success probability, seen in Proposition~\ref{prop3}, generalises to larger $d$.
\begin{proposition}\label{prop4}
	For every odd prime $d$ and state $\rho_{AB}$, there exists a quantum model achieving the average success probability
	\begin{equation}\label{quditSF}
	\mathcal{R}_d=\frac{1}{d+1}+\frac{d}{d+1}\text{EF}_d(\rho).
	\end{equation}
\end{proposition}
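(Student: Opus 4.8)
The plan is to generalise the explicit qubit construction from the proof of Proposition~\ref{prop3} to odd prime $d$, following the same template: let Alice and Bob first apply optimal CPTP maps $\Lambda_1,\Lambda_2$ that concentrate the teleportation fidelity of $\rho_{AB}$ onto $\phi^+_d$, then run the deterministic product-measurement protocol from section~\ref{sec4} on the resulting effective state. Concretely, I would write
\begin{equation}
\mathcal{R}_d=\frac{1}{d^4(d+1)}\sum_{x,y,z}\tr\left((\Lambda_1\otimes\Lambda_2)[\rho]\,(U_x^\dagger\otimes U_y^\dagger)\,M_{w_z|z}\,(U_x\otimes U_y)\right),
\end{equation}
with $U_x=X^{x_0}Z^{x_1}$ as in Eq.~\eqref{WHunitary} and $M_{c|z}$ the MUB-based product measurement in Eq.~\eqref{MUB}. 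The goal is to collapse the sum over $x,y,z$ into an operator proportional to $\phi^+_d$ plus an identity piece, so that the trace against $(\Lambda_1\otimes\Lambda_2)[\rho]$ returns precisely $\text{EF}_d$ after optimising the channels.

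The key computational step is to carry out the sums over Alice's and Bob's inputs. Using the conjugation relations listed just before Eq.~\eqref{quantityeq}, I would define for each $z$ the operators $\mathcal{B}_z^{(1)}=\sum_x (\text{phase})\,U_x^\dagger N^R_{\cdot|z}U_x$ and the analogous $\mathcal{B}_z^{(2)}$ on Bob's side, exactly paralleling the $\mathcal{B}_z$ appearing in the qubit proof. The Weyl--Heisenberg covariance should twirl each single-system projector $\ketbra{e_{m,z}}{e_{m,z}}$ into something diagonal in the appropriate basis, and summing over the input that controls the relevant phase should project onto a single eigenvalue, leaving a clean operator (in the qubit case this was $4X$, $4Z$, $4XZ$). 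The winning-condition structure in Eq.~\eqref{wincond3}, in particular the $\delta_{c_1-c_2,c}$ correlation between $R_1$ and $R_2$ outcomes together with the matched phases $x_1+y_1$ and $-2z(x_0-y_0)$, is precisely what should force $\sum_z \mathcal{B}_z^{(1)}\otimes\mathcal{B}_z^{(2)}$ to be proportional to $d\,\phi^+_d-\openone$, generalising the identity $\sum_z \mathcal{B}_z^{(1)}\otimes\mathcal{B}_z^{(2)}=16(4\phi^+_2-\openone)$. Verifying the constant and the $\phi^+_d-\openone$ split is where the bookkeeping lives.

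I would organise the calculation by separating the $z=d$ term (the computational-basis measurement, which contributes the $w_d=x_0-y_0$ condition) from the $z\neq d$ terms (the $d$ genuine MUB measurements). The $z=d$ block should produce the diagonal part of $\phi^+_d$, while the $d$ MUB blocks supply the off-diagonal coherences; together they must reconstruct the full projector $\phi^+_d=\frac{1}{d}\sum_{i,j}\ketbra{ii}{jj}$. This is exactly the completeness relation for $d+1$ MUBs, namely $\sum_{\text{all bases}}\sum_m \ketbra{e_m}{e_m}\otimes\ketbra{e_m^*}{e_m^*}$ reproducing the maximally entangled state plus identity, so the needed identity is structurally the MUB resolution that underlies why the protocol gives $\mathcal{R}_d=1$ on $\phi^+_d$ itself. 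Finally I would insert the resulting operator into $\mathcal{R}_d=\frac{1}{d+1}+(\text{const})\tr((\Lambda_1\otimes\Lambda_2)[\rho]\,\phi^+_d)$ and optimise over $\Lambda_1,\Lambda_2$ to obtain $\text{EF}_d(\rho)$ with the correct prefactor $\tfrac{d}{d+1}$, fixing the additive $\tfrac{1}{d+1}$ by evaluating on the maximally mixed input.

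The main obstacle I anticipate is the phase bookkeeping in the $z\neq d$ sums: each MUB projector carries the quadratic-in-index phases $\omega^{l(m+zl)}$, and conjugation by $U_x=X^{x_0}Z^{x_1}$ introduces further $z$-dependent quadratic phases $\omega^{zt^2-tm}$. Ensuring that, after summing $x$ and $y$ against the sign/phase factors dictated by $w_z$, all spurious $z$-dependent phases cancel coherently across the $R_1\otimes R_2$ tensor structure — rather than merely on each factor separately — is the delicate part, and it is precisely here that the odd-primality of $d$ (so that $2$ is invertible mod $d$ and the MUBs are complete) must be used. If the cancellation works as in the qubit case, the rest is a direct transcription; establishing that cancellation in full generality for odd prime $d$ is the crux.
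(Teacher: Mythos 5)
Your top-level strategy coincides with the paper's proof in Appendix~\ref{App:SFqudit}: pre-compose with arbitrary CPTP maps $\Lambda_1,\Lambda_2$, run the deterministic MUB protocol of section~\ref{sec4} on $(\Lambda_1\otimes\Lambda_2)[\rho]$, collapse the sum over $(x,y,z)$ into $\openone+d\,\phi^+_d$, and optimise the channels to recover $\text{EF}_d(\rho)$; the completeness relation you identify, $\sum_{z=0}^{d}\sum_{c}E_{c|z}\otimes E_{c|z}^*=\openone+d\,\phi^+_d$ for a complete set of $d+1$ MUBs, is exactly the key fact the paper invokes (citing \cite{Morelli2023}). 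Where you diverge is the mechanism of the collapse, and your sketch has one genuine defect there: the object $\mathcal{B}_z^{(1)}\otimes\mathcal{B}_z^{(2)}$, one Hermitian observable per $z$, does not generalise to $d>2$. For $d$-valued outcomes the success probability is not of the form $\tfrac{1}{2}+\tfrac{1}{2}\langle\text{observable}\rangle$; you would need all $d-1$ Fourier components $B_z^{(k)}=\sum_c\omega^{kc}M_{c|z}$ per basis (the qubit proof of Proposition~\ref{prop3} is the degenerate case $d-1=1$), each a non-Hermitian Weyl-type unitary, and the closing identity then becomes the Weyl--Heisenberg completeness $\sum_V V\otimes V^*=d^2\phi^+_d$ over all $d^2$ displacement operators. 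That route can be made to work, but it carries precisely the phase bookkeeping you flag as the crux.

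The more important observation is that this crux is an artefact of the observable formulation and dissolves entirely at the projector level, which is how the paper proceeds. The covariance relations \eqref{lemma} show that conjugating the rank-one winning effects by $U_x\otimes U_y$ merely permutes the MUB index: the quadratic phases $\omega^{zt^2-tm}$ are global phases on the vectors and cancel identically when one forms $\ketbra{\nu_{c_1xz}}{\nu_{c_1xz}}$, and the conjugate basis on the second factor picks up exactly the compensating shift, yielding (Eq.~\eqref{q4}) the pair $E_{c_1+2zx_0-x_1|z}\otimes E^*_{c_1+2zx_0-x_1|z}$ with a \emph{common} index that is moreover independent of $y$. Summing over $(x,y,c_1)$ then gives $\mathcal{R}^{(z)}_d=\sum_c E_{c|z}\otimes E^*_{c|z}$ for each $z$ with no interference between terms whatsoever, and the MUB identity finishes the proof in one line. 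So your plan is sound and would close along the Fourier route after replacing the single $\mathcal{B}_z$ by the family $B_z^{(k)}$, but the delicate coherent cancellation across the tensor factors that you anticipate never needs to be established: working with the winning projectors directly, as the paper does, makes it trivial.
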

Moreover, numerics for $d=3$ suggests that for pure states the value \eqref{quditSF} is optimal, but reveals that the same is not always true  for mixed states.  If Conjecture~\ref{Conj} is true, Proposition~\ref{prop4} implies that every state with a non-trivial maximally entangled fraction exceeds the limitation \eqref{conj} and is therefore certified as entangled. In particular, every entangled isotropic state $\rho_v^\text{iso}$ has a non-trivial maximally entangled fraction and therefore this family of states is optimally certified.

\section{Discussion}

\begin{table*}[]
	\begin{tabular}{|c|c|c|c|}
		\hline
		& Qubit                 & Qudit                 & \begin{tabular}[c]{@{}c@{}}Fundamental \\ limitation\end{tabular} \\ \hline
		EAPM                                                         &                $v>\frac{1}{\sqrt{3}}$       &           $v>\frac{1}{\sqrt{d+1}}$            & One-way steering                                                  \\ \hline
		\begin{tabular}[c]{@{}c@{}}Symmetric \\ EAPM\end{tabular}    &             $v>\frac{1}{3}$          &      \cellcolor[HTML]{7CFFA8}$v>\frac{1}{d+1}$                 & ?                                                                 \\ \hline
		\begin{tabular}[c]{@{}c@{}}Steering \\ (via MUBs)\end{tabular} &        $v>\frac{1}{3}$             &        $v>\frac{1}{\sqrt{d+1}}$   \cite{Marciniak2015, Skrzypczyk2015}             &    \begin{tabular}[c]{@{}c@{}}Number \\ of MUBs \end{tabular}                                                              \\ \hline
		\begin{tabular}[c]{@{}c@{}}Steering (general)\end{tabular}   &         $v>\frac{1}{2}$              &       $v>\frac{H_d-1}{d-1}$   \cite{Wiseman2007}             & \begin{tabular}[c]{@{}c@{}}Local hidden \\ states\end{tabular}    \\ \hline
		Dense coding                                                 &         $v>\frac{1}{3}$              &      $v>\frac{1}{d+1}$                  & \begin{tabular}[c]{@{}c@{}}Bell state \\ measurement\end{tabular} \\ \hline
		\multicolumn{1}{|l|}{Bell nonlocality}                       & $v>0.6961$ \cite{Designolle2023} &\begin{tabular}[c]{@{}c@{}} $v>0.6734$ \cite{Collins2002}\\  for $d\rightarrow \infty$\end{tabular} & \begin{tabular}[c]{@{}c@{}}Local hidden\\  variables\end{tabular} \\ \hline
	\end{tabular}
	\caption{Summary of results and comparison with other approaches. The \textit{Qubit} and \textit{Qudit} columns indicate bounds on the critical visibility for certifying the isotropic state. In the case of the symmetric EAPM scenario (also dense coding and general steering), the results are optimal. In the case of \textit{Qubit Bell nonlocality} the result is known to be nearly optimal \cite{Designolle2023} but for \textit{Qudit Bell nonlocality} the optimal bound is largely an open problem. We have included a comparison with the best known steering bound for protocols based on complete sets of mutually unbiased bases since our construction in the EAPM scenario also is based on these bases. The coloured box assumes Conjecture~\ref{Conj}. Proposition~\ref{prop2} shows a fundamental limitation in the EAPM scenario. Whether a corresponding limitation exists in the symmetric EAPM scenario is an open problem but it must be an entanglement concept that is weaker than usefulness in  fidelity-based quantum teleportation.}\label{Tab}
\end{table*}

We have shown that product measurements are sufficient for revealing the advantages of  noisy forms of entanglement in prepare-and-measure scenarios, and that this can be achieved via simple communication tasks. In the standard EAPM scenario, we showed that visibility requirements for white noise  can decrease as the inverse square-root of the dimension.  However, we also showed that this scalability is fundamentally limited by a need for steerability for entanglement advantages. By proposing the symmetric EAPM scenario, we showed how this limitation can be overcome, sometimes even in an optimal way. This is exemplified by every entangled qubit Werner state being certified, as well as every state useful for fidelity-based teleportation. Beyond qubits in the symmetric EAPM scenario, we also showed how these results can be generalised to prime-dimensional systems, but this ultimately requires a proof of  Conjecture~\ref{Conj}. Extending our methods to arbitrary, non-prime, dimensions is a natural next step. 

Our results pave the way for theoretical exploration and experimental implementation of strong forms of  semi-device-independent entanglement certification, which may apply also to finer entanglement concepts such as Schmidt numbers or fidelity estimation with a target state.  It is appropriate to label our scenarios as semi-device-independent, because they require none of the quantum devices to be perfectly characterised, but \textit{only} that the number of degrees of freedom in the channel is bounded. Therefore, this form of entanglement certification is far stronger than standard entanglement witnessing, where the devices are assumed to be flawless. For instance, the latter is known to be vulnerable to false positives when devices do not precisely correspond to the desired measurement \cite{Rosset2012, Morelli2023, cao2023genuine}.

The main results of this work are summarised in the first two rows of Table~\ref{Tab}, and the rest of the table compares our results with other relevant types of protocols. The table focuses on the well-known isotropic state for the sakes of simplicity and providing a concrete benchmark for the protocols. However, in general our results apply to arbitrary states, as no assumption on the entanglement source is required. Using Table~\ref{Tab}, we now proceed to discuss our results in this broader context of entanglement certification.

Firstly, Table~\ref{Tab} shows that our protocol for the EAPM scenario, which is based on measuring products of complete sets of MUBs, has the same certification performance as steering protocols based on complete sets of MUBs \cite{Marciniak2015, Skrzypczyk2015}; at least when one uses the best known closed expression for the performance of the two protocols. However, the exact performance of both protocols is underestimated since a precise analytical solution is not known in both cases. Notably, if one considers general steering protocols, with infinitely many measurements, the critical visibility can be further reduced \cite{Wiseman2007}. It is an interesting conceptual question whether there exists product measurement protocols in the EAPM scenario that, in the limit of many measurements, can reach the critical  visibility for steering, which is $v=(H_d-1)/(d-1)$, where $H_d=\sum_{k=1}^d 1/k$. However, our protocols in the symmetric EAPM scenario, again using  products of complete sets of MUBs, outperform significantly the general steering bound. Due to our use of product measurements, we achieve this while using similar experimental resources as employed in steering experiments. Interestingly, from the point of view of the assumptions made on the system, we require only a dimension bound on the channel, which is often less severe than the assumption that one measurement device is flawlessly characterised, which is employed in steering. Notably, a strict dimension assumption can also be relaxed so that undesired  small high-dimensional components associated with the implementation can be taken into account \cite{AlmostQudit}. 
	
Secondly, it is well-known that $d$-dimensional dense coding protocols can detect every isotropic entangled state, namely $v>\frac{1}{d+1}$ \cite{Abbott2018}. As we proved for qubits and conjectured for  higher dimensions, the same holds for our protocol in the symmetric EAPM scenario. In this sense, we preserve the certification power for the isotropic state while greatly reducing experimental requirements; from deterministic and complete Bell state measurements to product measurements of separate systems. It is relevant to note that we are not the first to realise that product measurements can give rise to strong quantum correlations in the EAPM scenario, as this was reported in Ref.~\cite{Piveteau2022}. However, the protocol proposed there works only for qubits, and while it is optimally implemented with product measurements, it has a very small noise tolerance. Specifically, it achieves $v=1/\sqrt{2}\approx 0.7071$, compared to our $v=1/\sqrt{3}\approx 0.5774$ in the EAPM scenario and $v=1/3$ in the symmetric EAPM scenario. Note that the assumptions in these protocols are always the same, namely a dimension bound on the channel.

Thirdly, we can compare our certification results to those obtained in Bell inequality tests. Certification via nonlocality is conceptually the strongest, since it requires no assumptions beyond the validity of quantum theory, but the certification performance is more limited. Little is known about the possibility of violating Bell inequalities with isotropic states beyond dimension two. To our knowledge, the best bound on $v$ is that reported in \cite{Collins2002}; it  decreases monotonically with $d$ but converges only to $v=0.6734$ in the limit of large $d$. A more certain comparison is possible in the qubit case; here the  optimal known visibility is $v\approx 0.6961$ and is known that no Bell inequality can reduce it below   $v\approx 0.6875$ \cite{Designolle2023}. In contrast, our protocols in both the symmetric and standard EAPM scenarios achieve certification at signficantly smaller visibilities.
	
Moreover, as noted in Table~\ref{Tab}, it is possible that  product measurement protocols in the symmetric EAPM scenario are fundamentally limited by some operational notion of nonclassicality that is  weaker than one-way steering but stronger than entanglement. Given our results, one may be inclined to suggest that the relevant concept is usefulness in fidelity-based teleportation. However, this is not accurate because we can numerically find entanglement advantages from states with a trivial maximally entangled fraction.

{Furthermore, in our protocols, Charlie always performs product measurements. However, sometimes it can be practically costly to communicate the quantum messages from Alice and Bob to Charlie. We note that this can be circumvented by ``splitting'' Charlie into two separate parties, one neighbouring Alice and one neighbouring Bob, with independent inputs $z$ and $z'$. By associating these inputs to the respective single-particle measurements entering Charlie's product measurement, we can recover the same statistics as in our protocols by imposing the post-selection condition $z=z'$.}

Another relevant discussion is that of closing the detection loophole. Our protocols were not designed with the aim of minimising detection requirements, but they nevertheless perform well in this regard.  Deterministic and complete entangled measurements on separate photons are well-known to be complicated and require additional resources such as nonlinear optics or auxiliary qubits, see e.g.~\cite{Kim2001, Barreiro2008, Williams2017}. This is particularly well-known for the seminal Bell state measurement \cite{Lutkenhaus1999}, and it typically means that it is significantly harder to reach high total detection efficiencies with such measurements. Moreover, even implementing such measurements in dimensions larger than two is a formidable challenge. Using protocols based on product measurements offers a clear advantage. For instance, in the symmetric EAPM scenario implemented with qubits, we require a detection efficiency per photon of roughly $57.7\%$. Recent Bell inequality experiments have shown single photon detection efficiencies far above this value \cite{Liu2018, Shalm2021}. In contrast, we are not aware of any relevant two-photon Bell state measurement implemented with an efficiency close to this value. Notably, the theoretical  efficiency per photon needed in entanglement certification via dense coding is the same as in our protocol in the symmetric EAPM scenario. Furthermore, thanks to the dimensional scalability of product measurements, both Proposition~\ref{prop1} and Conjecture~\ref{Conj} suggests that the advantages in detection efficiency are even more significant for larger dimensions, as the efficiency threshold per photon will decrease monotonically with $d$. For instance, recent experiments on entangled four-dimensional photons show detection efficiencies around 71.7\% \cite{Hu2022}, well above the regime needed for protocols of our type.

\begin{acknowledgments}
C.R.C.~and A.T.~are supported  by the  Wenner-Gren Foundation, by the Knut and Alice Wallenberg Foundation through the Wallenberg Center for Quantum Technology (WACQT) and the Swedish Research Council under Contract No.~2023-03498. P. B. also acknowledges funding from the European Research Council (Consolidator grant `Cocoquest' 101043705). M.M. acknowledges funding from the DFG/FWF Research Unit FOR 2724 `Thermal machines in the quantum world'.
\end{acknowledgments}

\bibliography{references_manuscript}

\appendix

\section{Proof of $L_d$}\label{App:firstgame}
We prove that for $d$-dimensional states $\alpha_x$ and measurements $\{M_{c|z}\}$ it holds that
\begin{equation}\label{quantity}
\max_{\{\alpha_x\}, \{M_{c|z}\}} \frac{1}{d^2(d+1)}\sum_{x,z} \tr\left(\alpha_x M_{w_z|z}\right) \leq  \frac{1}{d}\left(1+\frac{d-1}{\sqrt{d+1}}\right)\equiv L_d.
\end{equation}
The proof closely parallels that  used to arrive at Result~1 in Ref.~\cite{farkas2023simple}. 

Trivially re-express the objective function on the left-hand-side of \eqref{quantity} as
\begin{multline}\label{reexpress}
\mathcal{S}_d=\frac{1}{d^2(d+1)}\sum_{x,z} \frac{1}{d}\tr\left(M_{w_z|z}\right)\\
+\frac{1}{d^2(d+1)}\sum_{x} \tr\left(\alpha_x \sum_z \left(M_{w_z|z}-\frac{\openone}{d}\tr(M_{w_z|z})\right)\right).
\end{multline}
Note that because of the winning conditions
\begin{align}\nonumber
&z\neq d:  &&w_z=x_1-2zx_0 \mod{d} \\
&z=d: && w_d=x_0,
\end{align}
it follows that  $\sum_x M_{w_z|z}=d\openone$ for every $z$. Therefore, the first term in \eqref{reexpress} becomes simply $\frac{1}{d}$. For the second term in \eqref{reexpress}, the optimal $\alpha_x$ corresponds to the largest eigenvalue of the operator $O_x=\sum_z \left(M_{w_z|z}-\frac{\openone}{d}\tr(M_{w_z|z})\right)$. Thus we have
\begin{equation}
\mathcal{S}_d=\frac{1}{d}+\frac{1}{d^2(d+1)}\sum_{x} \|O_x\|_\infty,
\end{equation}
where $\|\cdot \|_\infty$ is the largest modulus eigenvalue.  We then use norm inequality from Ref.~\cite{farkas2023simple}: for any trace-zero Hermitian $O$ it holds that
\begin{equation}
   \|O\|_\infty\leq \sqrt{\frac{\text{rank}(O)-1}{\text{rank}(O)}} \|O\|_F,
\end{equation}
where $\|O\|_F=\sqrt{\tr(OO^\dagger)}$ is the Frobenius norm. Applying this to each $O_x$ and then using the concavity of the square-root function yields
\begin{equation}\label{EqSS}
    \mathcal{S}_d\leq \frac{1}{d}+\frac{1}{d(d+1)}\sqrt{\frac{d-1}{d}} \sqrt{\sum_x \tr(O_x^2)}.
\end{equation}
We proceed with examining $I=\sum_x \tr(O_x^2)$. It becomes 
\begin{equation}\label{EqS}
I=\sum_x \sum_{z',z=1}^{d+1} \tr(M_{w_{z'}|z'}M_{w_z|z}) -\frac{1}{d}\sum_x \sum_{z',z=1}^{d+1} \tr(M_{w_{z'}|z'}) \tr(M_{w_z|z}).
\end{equation}
Label the first term $I_1$ and the second term $I_2$. We evaluate them one by one.
\begin{align}
I_1=\sum_x \sum_{z} \tr(M_{w_z|z}^2)+\sum_x \sum_{z'\neq z} \tr(M_{w_{z'}|z'}M_{w_z|z}).
\end{align}
Consider the second term. For a given pair $(z',z)$, we can define the set $T_{z',z}^c$ as the set of all pairs $(x_0,x_1)$ such that $w_{z'}=c$. (i) When  $z'=d$, characterising $T_{z',z}^c$ is trivial since its elements simply correspond to the pairs $\{(c=x_0,x_1)\}_{x_1}$. It is easily seen that $\{T_{z',z}^c\}_{c=0}^{d-1}$ is a parition of the set of all pairs $(x_0,x_1)$. (ii) When $z'\neq d$ and $z=d$, we let $T_{z',z}^c$ correspond to all pairs $(x_0,x_1)$ where $x_0=(2z')^{-1}(x_1-c)$. Note that the modular inverse always exists and is unique when $d$ is prime.  $\{T_{z',z}^c\}_{c=0}^{d-1}$ is a parition of the set of all pairs $(x_0,x_1)$. (iii) When $z'\neq d$ and $z\neq d$ the elements of $T_{z',z}^c$ correspond to choosing $x_0=(2z')^{-1}(x_1-c)$ for $x_1\in\{0,\ldots,d-1\}$. This  gives $w_z=x_1\left(1-2z(2z')^{-1}\right)+2z(2z')^{-1}c$. Again, $\{T_{z',z}^c\}_{c=0}^{d-1}$ is a parition of the set of all pairs $(x_0,x_1)$. 

Over all three cases, it holds that  for every $(z',z,c)$,
\begin{equation}\label{step}
\sum_{x\in T_{z',z}^c} M_{w_z|z}=\openone.
\end{equation}
Thus, we can write
\begin{align}\nonumber
&\sum_x \sum_{z'\neq z} \tr(M_{w_{z'}|z'}M_{w_z|z})=\sum_{z'\neq z}\sum_{c=0}^{d-1} \sum_{x\in T_{z',z}^c} \tr(M_{w_{z'}|z'}M_{w_z|z})\\ \nonumber
& =\sum_{z'\neq z}\sum_{c=0}^{d-1}  \tr\bigg(M_{c|z'}\sum_{x\in T_{z',z}^c} M_{w_z|z}\bigg)=\sum_{z'\neq z}\sum_{c=0}^{d-1}  \tr(M_{c|z'})\\ 
&=d \sum_{z'\neq z}=d^2(d+1).
\end{align}
In the first step, we used that $T_{z',z}^{c}$ is a partition of the set of $x$. In the second step, we use that $w_{z'}=c$ for every $x\in T_{z',z}^{c}$. In the third step, we use Eq.~\eqref{step} when $(z',z)\neq d$ and similarly when either $z'=d$ or $z=d$. Thus, we have
\begin{equation}
I_1=\sum_x \sum_{z} \tr(M_{w_z|z}^2)+d^2(d+1).
\end{equation}

Similarly, for the term $I_2$, we obtain the lower bound
\begin{align}
I_2\geq \frac{1}{d}\sum_x \sum_{z}\tr(M_{w_z|z}^2)+\frac{1}{d}\sum_x \sum_{z'\neq z}\tr(M_{w_{z'}|z'})\tr(M_{w_z|z}).
\end{align}
Using again the partition $\{T_{z',z}^c\}_c$ for the set of $x$, the second term reduces to becomes $d^2(d+1)$.

Inserting the above back into Eq.~\eqref{EqS}, we obtain
\begin{align}\nonumber
I&\leq  \frac{d-1}{d}\sum_x \sum_{z} \tr(M_{w_z|z}^2)\leq \frac{d-1}{d}\sum_x \sum_{z} \tr(M_{w_z|z})\\
&=d(d^2-1).
\end{align}
Inserting this into Eq.~\eqref{EqSS} we obtain the final result
\begin{align}
\mathcal{S}_d\leq \frac{1}{d}+\frac{d-1}{d\sqrt{d+1}}.
\end{align}

\section{Qubit balanced XOR schemes}\label{App:qubitproof}
We consider the case of qubit communication ($d=2$) in the symmetric EAPM scenario. We allow Alice, Bob and Charlie to have arbitrary inputs, with alphabet sizes $N_X$, $N_Y$ and $N_Z$ inputs each. Charlie's output is binary,  $c\in\{0,1\}$. In general, for each $z$, Charlie is tasked with outputing the value of some function of Alice's and Bob's inputs, $c=f_z(x,y)$. Its average success rate is
\begin{equation}
\mathcal{W}=\frac{1}{N_X N_YN_Z}\sum_{x,y,z}p(c=f_z(x,y)|x,y,z).
\label{eq: av success qubit}
\end{equation}
We focus on the broad class of schemes in which the winning condition is an XOR game with balanced functions, i.e.~any choice of $f_z(x,y)$ such that 
\begin{equation}
    f_z(x,y)=g_z(x)+ h_z(y) \mod{2},
\end{equation}
for some arbitrary functions $g_z$ and $h_z$ that are balanced. Recall that a function $u:\{1,\ldots,M\}\rightarrow \{0,1\}$ is called balanced if half the domain is mapped to $0$ and the other half is mapped to $1$. This means that $N_X$ and $N_Y$ must be even numbers.

We now derive an upper bound on $\mathcal{W}$ for any quantum model without shared entanglement. In such models, the average success rate reads
\begin{equation}
\mathcal{W}=\frac{1}{N_XN_YN_Z}\sum_{c,x,y,z}\tr\left(\alpha_x\otimes\beta_y M_{c|z}\right)\delta_{c,f_z(x,y)}.
\end{equation}
Using the normalisation $M_{0|z}+M_{1|z}=\openone$, we can express this as
\begin{equation}
\mathcal{W}=\frac{1}{2}+\frac{1}{N_XN_YN_Z}\sum_{x,y,z}(-1)^{f_z(x,y)}\tr\left(\alpha_x\otimes\beta_y M_{0|z}\right).
\end{equation}
Using that $f_z(x,y)=g_z(x)+ h_z(y)$, we re-arrange this as
\begin{equation}
	\mathcal{W}=\frac{1}{2}+\frac{1}{N_XN_YN_Z}\sum_{z}\tr\Bigg(M_{0|z}O^A_z\otimes O^B_z\Bigg).
 \label{eq: W Oz form}
\end{equation}
where
\begin{align}
&O^A_z=\sum_x (-1)^{g_z(x)}\alpha_x\\
&O^B_z=\sum_y (-1)^{h_z(y)}\beta_y.
\end{align}
The optimal choice of $M_{0|z}$ is the projector onto the positive eigenspace of the operator $O^A_z\otimes O^B_z$. Thus, the optimal value of the above trace is the sum of the positive eigenvalues of $O^A_z\otimes O^B_z$. Note that since $g_z$ and $h_z$ are balanced, $\tr(O^A_z)=\tr\left(O^B_z\right)=0$. Together with the fact that $O^A_z$ and $O^B_z$ are $2\times 2$ Hermitian operators, it follows that their two respective eigenvalues have the same magnitude and opposite sign, i.e.~$\lambda_1(O^A_z)=-\lambda_2(O^A_z)$ and $\lambda_1(O^B_z)=-\lambda_2(O^B_z)$, where $\lambda_1$ denotes the positive eigenvalue. Hence, the sum of positive eigenvalues of $O^A_z\otimes O^B_z$ becomes $\lambda_1(O^A_z)\lambda_1(O^B_z)+\lambda_2(O^A_z)\lambda_2(O^B_z)=2\lambda_1(O^A_z)\lambda_1(O^B_z)$. Hence,
\begin{equation}\nonumber
	\mathcal{W}\leq \frac{1}{2}+\frac{2}{N_XN_YN_Z}\sum_{z}\lambda_1(O^A_z)\lambda_1(O^B_z).
\end{equation}
We can now use the Bloch vector formalism to write $\alpha_x=(\openone+\vec{\alpha}_x\cdot \vec{\sigma})/2$ and $\beta_y=(\openone+\vec{\beta}_y\cdot \vec{\sigma})/2$, where $\vec{\sigma}=(\sigma_X,\sigma_Y,\sigma_Z)$, for some unit vectors $\{\vec{\alpha}_x\}$ and $\{\vec{\beta}_y\}$ in $\mathbb{R}^3$. 
We can now write $O^A_z=\frac{1}{2}\vec{a}_z\cdot \vec{\sigma}$ and $O^B_z=\frac{1}{2}\vec{b}_z\cdot \vec{\sigma}$, where we have defined the unnormalised vectors $\vec{a}_z=\sum_{x}(-1)^{g_z(x)}\vec{\alpha}_x$ and $\vec{b}_z=\sum_{y}(-1)^{h_z(x)}\vec{\beta}_y$. It is easily shown that the  eigenvalues of an operator $\vec{n}\cdot\vec{\sigma}$ are $\pm |\vec{n}|$. Thus, we arrive at
\begin{align}\nonumber
	\mathcal{W}&\leq \frac{1}{2}+\frac{1}{2N_XN_YN_Z}\sum_{z} |\vec{a}_z||\vec{b}_z|\\
	&\leq \frac{1}{2}+\frac{1}{2N_XN_YN_Z}\sqrt{\sum_{z} |\vec{a}_z|^2}\sqrt{\sum_{z} |\vec{b}_z|^2},
\end{align}
where in the second line we have used the Cauchy-Schwarz inequality. This has the advantage that we can now consider the optimisation problem separately for each square-root factor. The expressions under the square-roots can be expanded to $\sum_z |\vec{a}_z|^2=N_XN_Z+2\eta$ and $\sum_z |\vec{b}_z|^2=N_YN_Z+2\xi$, where
\begin{align}
	&{\eta}\equiv \sum_{x<x'} \vec{\alpha}_x\cdot \vec{\alpha}_{x'} \sum_z(-1)^{g_z(x)+g_z(x')}\\
	& \xi\equiv \sum_{y<y'} \vec{\beta}_y\cdot \vec{\beta}_{y'}\sum_z  (-1)^{h_z(y)+h_z(y')}.
\end{align}
Thus, one is left with optimising these two expressions independently. One possible way of doing this is to define the Gram matrix $G_{x,x'}=\alpha_{x}\cdot \alpha_{x'}$.  Note that this matrix  by construction is both symmetric ($G=G^T$) and positive semidefinite ($G\succeq 0$). In addition, its diagonal elements are unit ($G_{x,x}=1$). Hence, we can relax our optimisation of $\eta$ to the semidefinite program 
\begin{equation}
\begin{aligned}\label{eq:alpha_tilde}
\tilde{\eta}=\max_G \quad & \sum_{x<x'} G_{x,x'}\left(\sum_z (-1)^{g_z(x)+g_z(x')}\right)\\
\textrm{s.t.}\quad & G\succeq 0,\qquad  G=G^T, \qquad G_{xx}=1 \qquad \forall \,x
\end{aligned}.
\end{equation}
Similarly, we can define a Gram matrix over the Bloch vectors $\{\vec{\beta}_y\}$ and bound $\xi$ by the analogous semidefinite program.
\begin{equation}
\begin{aligned}\label{eq:beta_tilde}
\tilde{\xi}=\max_G \quad & \sum_{y<y'} G_{y,y'}\left(\sum_z (-1)^{h_z(y)+h_z(y')}\right)\\
\textrm{s.t.}\quad & G\succeq 0,\qquad  G=G^T, \qquad G_{yy}=1 \qquad \forall \,y
\end{aligned}
\end{equation}
Thus, we can systematically compute bounds on the form
\begin{equation}\label{appbound}
\mathcal{W}\leq \frac{1}{2}+\frac{1}{2N_XN_YN_Z}\sqrt{N_XN_Z+2\tilde{\eta}}\sqrt{N_YN_Z +2\tilde{\xi}}.
\end{equation}

We now apply this to the specific scheme considered in the main text, i.e.~the quantity $\mathcal{R}_2$. Clearly, the symmetry between Alice and Bob means $\tilde{\eta}=\tilde{\xi}$. Moreover, for the above semidefinite program, all coefficients appearing in front of the Gram matrix in the objective function are negative one. Therefore, the program is invariant under permutations of the label $x$. Therefore, the semidefinite program simplifies to  
\begin{equation}\label{eq:alpha_tilde}
\begin{aligned}
\tilde{\eta}=&\max \quad -6u   \\
\textrm{s.t.}\quad &\begin{pmatrix}
1 & u & u&u \\
u & 1 & u &u\\
u&u&1 &u\\
u&u&u&1
\end{pmatrix}\succeq 0
\end{aligned}.
\end{equation}
As the distinct eigenvalues of the matrix are $1-u$ and $1+3u$, it follows that the optimal choice is $u=-\frac{1}{3}$ and thus $\tilde{\eta}=2$. Inserted into \eqref{appbound}, we obtain  $\mathcal{R}_2\leq\frac{2}{3}$.

\section{Connection with qubit maximally entangled fraction}\label{App:SFqubit}

Here, 
we would like to prove that Proposition 2 in Section~\ref{sec3} is tight for shared arbitrary \textit{pure} states. To this end, we find an upper bound that is equal to Eq.~\eqref{qubitSF}.

Firstly, note that, any two-qubit pure state can be written in its Schmidt decomposition as
\begin{align}
\ket{\Psi_{AB}(\theta)}&= \cos{\theta}\,\vert{\psi_A, \psi_B\rangle}+\, \sin{\theta}\,\vert{\psi_{A}^{\perp}, \psi_{B}^{\perp}\rangle}\nonumber\\
&= V^A \otimes V^B \left(\cos{\theta}\,\vert{0_A, 0_B\rangle}+\, \sin{\theta}\,\vert{1_{A}, 1_{B}\rangle}\right)\nonumber\\
&= \, V^A \otimes V^B \,\vert{\Phi_{AB}(\theta)\rangle},
\label{eq:Schmidt dec}
\end{align}
where $\{\psi_{\gamma},\, \psi^{\perp}_{\gamma}\}$ represent an arbitrary orthogonal bases for the party $\gamma\in\{A,B\}$, $V^{\gamma}$ is a local unitary acting on the same Hilbert space, and $\theta \in [0,\pi/4] $. 
One can show that the maximally entangled fraction of such a state is independent of the local unitaries, and only determined by $\theta$ 
\begin{equation}
\text{EF}_2\coloneqq\text{EF}_2(\Psi_{AB}(\theta))=\frac{1}{2}(1+\sin{2\theta}).
\label{eq: EF pure}
\end{equation}
   
   Secondly, take the same  arbitrary state $\Psi_{AB}(\theta)$ given in Eq.~\eqref{eq:Schmidt dec}. 
   Upon the application of local channels $\Lambda_{x}^{A}$ ($\Lambda_{y}^{B}$) by Alice (Bob), and the measurement $M_{c|z}$ by Charlie---not necessarily separable, the winning score reads 
\begin{align}
 \mathcal{R}_2 & = \frac{1}{48}\sum_{x,y,z}\tr\bigg((\tilde{\Lambda}^A_{x} \otimes \tilde{\Lambda}^B_{y})\, [\Phi_{AB}(\theta)] \, M_{f_z(x,y)|z}\bigg).
\end{align}
where we also absorbed the local unitaries $V^{\gamma}$ into the local channels such that $\Tilde{\Lambda}^{\gamma}_{\alpha} [\bullet] = \, \Lambda_{\alpha}^\gamma \, [V^\gamma \bullet V^{\gamma{\dagger}}]$---since Alice and Bob have the chance to optimise their local operations anyways.  
The average score rates can be further simplified to
\begin{equation}\label{c5}
 	\mathcal{R}_2
 =\frac{1}{2}+\frac{1}{48}\sum_{x,y,z}\tr\bigg((-1)^{f_z(x,y)}(\tilde{\Lambda}^A_{x} \otimes \tilde{\Lambda}^B_{y})\, [\Phi_{AB}(\theta)] \, M_{0|z}\bigg).
\end{equation}
Next, note that the shared state $\Phi_{AB}(\theta)$ can be expressed in terms of Pauli matrices
\begin{align}
	\Phi_{AB}(\theta)=&\frac{\cos^2{\theta}}{4}(\openone+\sigma_3)\otimes (\openone+\sigma_3) \nonumber\\
 +&\,\frac{\sin^2{\theta} }{4}(\openone-\sigma_3)\otimes (\openone-\sigma_3)\nonumber\\
 +& \frac{\sin{2\theta}}{4}(\sigma_1\otimes \sigma_1\,-\, \sigma_2\otimes \sigma_2).
 \label{eq: Pure state Pauli}
\end{align}

Furthermore, since $\tilde{\Lambda}_{\alpha}$ is a quantum channel, for any arbitrary Bloch vector $\vec m$ with $|\vec m|\leq 1$ we should have  
\begin{align}
    \tilde{\Lambda}_\alpha [\,\frac{1}{2}(\openone +\vec{m} \cdot \vec{\sigma} )\,]& = \,  \frac{1}{2}(\openone + \vec{r}_{\alpha} \cdot \vec{\sigma} ) \label{eq: CPTP property 1}, \\
    \tilde{\Lambda}_\alpha [\,\vec{m}\cdot \vec{\sigma}\, ]&= \,\vec{s}_\alpha\cdot\vec{\sigma} ,
    \label{eq: CPTP property 2} 
\end{align}
where $|\vec r_{\alpha}|\leq 1$---since the right hand side of Eq.~\eqref{eq: CPTP property 1} is a normalised state. Moreover, we should also have $|\vec s_{\alpha}|\leq 1$. To see this, firstly applying the channel to the maximally mixed state yields $\tilde \Lambda_{\alpha} [\frac{\openone}{2}] = \frac{1}{2} (\openone + \vec r^0_{\alpha} \cdot \vec\sigma)$ for some $|\vec r^0_{\alpha}|\leq 1$. 
Next, take two density matrices $\varrho_{\pm}= \frac{1}{2}(\openone \pm\vec{m} \cdot \vec{\sigma} )$, and apply the channel to them. The linearity of the channel dictates that
\begin{align*}
    \tilde \Lambda_{\alpha}[\varrho_{\pm}] = \tilde \Lambda_{\alpha}[\frac{\openone}{2}] \pm \frac{1}{2}\tilde \Lambda_{\alpha}[ \vec{m} \cdot \vec{\sigma}] = \frac{\openone}{2} + \frac{1}{2}(\vec r_{\alpha}^0 \pm \vec s_{\alpha})\cdot\vec\sigma.
\end{align*} 
Now, if $ |\vec s_{\alpha}| > 1$, at least one of the two vectors $\vec r_{\alpha}^0 \pm \vec s_{\alpha}$ has a norm above one, which contradicts the physicality of the channel. Thus we should always have $\vert \vec s_{\alpha} \vert \leq 1$. 

Applying the local channels to $\Phi_{AB}(\theta)$ gives
\begin{align}
	(\tilde{\Lambda}^A_{x} \otimes \tilde{\Lambda}^B_{y}&)\, [\Phi_{AB}(\theta)] \,=\frac{\cos^2{\theta}}{4}(\openone+\,\vec{m}^{1}_{x}\cdot\vec{\sigma} )\otimes (\openone+\vec{n}^{1}_{y}\cdot\vec{\sigma} ) \nonumber\\
+&\,\frac{\sin^2{\theta} }{4}(\openone+\vec{m}^{2}_{x}\cdot\vec{\sigma})\otimes (\openone+\vec{n}^{2}_{y}\cdot\vec{\sigma})\nonumber\\
 +& \frac{\sin{2\theta}}{4}(\vec{m}_{x}^{3}\cdot\vec{\sigma}\otimes \vec{n}_{y}^{3}\cdot\vec{\sigma} \,+\, \vec{m}_{x}^{4}\cdot \vec{\sigma} \otimes \vec{n}_{y}^{4}\cdot\vec{\sigma} ),
 \label{eq: Pure state Pauli after channel}
\end{align}
for some vectors $\{\vec{m}^{k}_{x}, \vec{n}^{k}_{y}\}_{k=1}^4$ that satisfy $|\vec{m}^{k}_{x}|\leq 1$ and $|\vec{n}^{k}_{y}|\leq 1$---note that we have absorbed a minus sign in the vector $\vec m_x^4$.
By plugging into Eq.\,\eqref{c5}, and noting that $f_z(x,y) = g_z(x) + h_z(y) \mod{2}$, with balanced functions $g_z(x)$ and $h_z(y)$, we have
\begin{align}
 	\mathcal{R}_2&
 =\frac{1}{2}+\frac{1}{48}\\
 &\times \sum_{x,y,z} \nonumber\sum_{i=1}^{4}\tau_i \,\tr\big((-1)^{g_z(x) + h_z(y)}\,  \vec{m}_{x}^i\cdot\vec{\sigma} \otimes\, \vec{n}_{y}^i\cdot \vec{\sigma} \, M_{0|z}\big),
\end{align}
where $  \tau= \frac{1}{4}(\cos^2(\theta),\sin^2(\theta),\sin{2\theta},\,\sin{2\theta})\geq 0$ within the domain of $\theta$. 
We have also made use of the fact that for any balanced function $f(x)$, $\sum_{x}(-1)^{f(x)}=0 .$ 

We can bring $\mathcal{R}_2$ to a similar form as Eq.\,\eqref{eq: W Oz form},   
\begin{equation}
 \mathcal{R}_2=\frac{1}{2}+\frac{1}{12}\sum_{i=1}^{4} \tau_i \,\sum_{z}\tr\Bigg(M_{0|z}O^A_{z,i}\otimes O^B_{z,i}\Bigg).
 \label{eq: W Oz form pure state}
\end{equation}
where
\begin{align}
&O^A_{z,i}=\frac{1}{2}\sum_x (-1)^{g_z(x)}\vec{m}_{x}^i\cdot\vec{\sigma},\\
&O^B_{z,i}=\frac{1}{2}\sum_y (-1)^{h_z(y)}\vec{n}_{y}^i\cdot\vec{\sigma}.
\end{align}
One can separately bound each of the four terms by using the result obtained in Appendix \ref{App:qubitproof}---see Eqs.~\eqref{appbound} and \eqref{eq:alpha_tilde}---to get 
\begin{equation}
 \sum_{z}\tr\big(M_{0|z}O^A_{z,i}\otimes O^B_{z,i}\big) \, \leq \, 8,
 \label{eq: W Oz form pure state}
\end{equation}
which is independent of ``$i$''. Therefore,

\begin{align}
	\mathcal{R}_2\, \leq \,\frac{1}{2}+\frac{2}{3}\sum_{i=1}^{4} \tau_i\, =\frac{1}{3}+\, \frac{2}{3}\, \text{EF}_2(\theta). 
 \label{eq: W upperbound}
\end{align}
where we have used Eq.~\eqref{eq: EF pure} to evaluate $\sum_{i=1}^{4} \tau_i=\, \text{EF}_2(\theta)-\frac{1}{4}$.

\section{Connection with maximally entangled fraction for arbitrary $d$}\label{App:SFqudit}
First, we prove that $\mathcal{R}_d=1$. For simplicity, we define the projector $E_{m|z}=\vert e_{m,z}\rangle \langle e_{m,z}\vert$. We measure one particle in the basis $\{E_{c_1|z}\}$, with outcome $c_1$, and the other particle in the conjugated basis $\{E^*_{c_2|z}\}$, with outcome $c_2$. The final output is defined as $c=c_1-c_2 \mod{d}$. Thus, the POVM describing this measurement becomes
	\begin{equation}
	M_{c|z}=\sum_{c_1,c_2} E_{c_1|z}\otimes E_{c_2|z}^* \delta_{c,c_1-c_2}.
	\end{equation}
	Next, via a direct calculation, it is possible to show these relations (given in the main text) for any $z=0,\ldots,d-1$,
	\begin{align}\nonumber \label{lemma}
	& X^t \vert{e_{m,z}\rangle}=\omega^{zt^2-tm} \vert{e_{m-2zt,z}\rangle}\\\nonumber
	& X^t \vert{e^*_{m,z}\rangle}=\omega^{-zt^2+tm} \vert{e^*_{m-2zt,z}\rangle}\\\nonumber
	& Z^t \vert{e_{m,z}\rangle}= \vert{e_{m+t,z}\rangle}\\
	& Z^t \vert{e^*_{m,z}\rangle}=\vert{e^*_{m-t,z}\rangle}.
	\end{align}
	That is, applying any unitary of the form $X^{t_1}Z^{t_2}$ to any of the eigenstates associated to the first $d$ bases are mapped into other eigenstates of the same basis. 
	
	Consider now the probability to output the right answer, $c=w_z$, for $z=0,\ldots,d-1$. It reads
	\begin{multline}
	p(c=w_z|x,y,z)=\\
	\sum_{c_1}\bracket{\phi^+_d}{\left(\vert \nu_{c_1xz}\rangle\langle \nu_{c_1xz}\vert \otimes \vert \mu_{c_1xyz}\rangle\langle \mu_{c_1xyz}\vert \right)}{\phi^+_d}
	\end{multline}
	where 
	\begin{align}
	& \vert{\nu_{c_1xz}\rangle}=Z^{-x_1}X^{-x_0} \vert{e_{c_1,z}\rangle} \\
    &\vert{\mu_{c_1xyz}\rangle}=Z^{-y_1}X^{-y_0} \vert{e^*_{c_1-w_z,z}\rangle}
    \end{align}
	Successively applying the relations \eqref{lemma}, one finds that
	\begin{align}\nonumber\label{q4}
	& 
 \vert \nu_{c_1xz}\rangle \langle \nu_{c_1xz}\vert =E_{c_1+2zx_0-x_1|z}\\
	& 
 \vert \mu_{c_1xyz}\rangle \langle \mu_{c_1xyz}\vert=E^*_{c_1+2zx_0-x_1|z}.
	\end{align}
	Notice that the second vector now has no dependence on $y$. From the relation, $\openone \otimes O \vert{\phi^+_d\rangle}=O^T \otimes \openone \vert{\phi^+_d\rangle}$ and the fact that $\{E_{m|z}\}$ forms a orthonormal basis, it follows that
	\begin{multline}
	p(w_z|x,y,z)=\frac{1}{d}\sum_{c_1}\Tr\left(E_{c_1+2zx_0-x_1|z}E_{c_1+2zx_0-x_1|z}\right)=1.
	\end{multline}
	Performing the same calculation for the computational basis, $z=d$, gives an analogous result. Hence, we have perfect correlations for every $z$ and hence achieve the algebraically optimal value $\mathcal{R}_d=1$.

The maximally entangled fraction is defined as
\begin{equation}
\text{EF}_d(\rho)=\max_{\Lambda_1,\Lambda_2} \bracket{\phi^+}{(\Lambda_1 \otimes \Lambda_2) [\rho]}{\phi^+_d}
\end{equation}
where $\Lambda_1$ and $\Lambda_2$ are CPTP maps with $d$-dimensional output spaces. If it exceeds $1/d$, the state is entangled. To show that this quantity is relevant for the value of $\mathcal{R}_d$, let Alice and Bob first apply local CPTP maps  $\Lambda_1$ and $\Lambda_2$, respectively, before applying the protocol used above to arrive at $\mathcal{R}_d=1$ for the maximally entangled state. 

We can write the figure of merit as
	\begin{align}
	\mathcal{R}_d=\frac{1}{d+1}\tr\left( (\Lambda_1 \otimes \Lambda_2) [\rho] \sum_{z=0}^{d} \mathcal{R}^{(z)}_d\right),
	\end{align}
	where we have defined
	\begin{align}
	\mathcal{R}^{(z)}_d= \frac{1}{d^4}\sum_{x,y,c_1}
 \vert \nu_{c_1xz} \rangle \langle \nu_{c_1xz}\vert \otimes \vert \mu_{c_1xyz} \rangle \langle \mu_{c_1xyz}\vert
\end{align}
	for $z=0,\ldots,d-1$.  Using \eqref{q4} and taking the sums over $(x,y,c_1)$, we get
	\begin{equation}
	\mathcal{R}^{(z)}_d= \sum_{c=0}^{d-1} E_{c|z}\otimes E_{c|z}^*.
	\end{equation}
	This can be thought of as an unnormalised correlated-coin state in the $z$'th MUB. Indeed, a direct calculation for the computational basis, namely $z=d$, analogously leads to $\mathcal{R}^{(d)}_d=\sum_{c=0}^{d-1}\ketbra{cc}$. Now, we can use the key fact that for any complete set of mutually unbiased bases, it holds that \cite{Morelli2023}
	\begin{equation}
	\mathcal{T}\equiv \sum_{z=0}^{d} \sum_{c=0}^{d-1}E_{c|z}\otimes E_{c|z}^*=\openone + d \phi^+_d.
	\end{equation}
	Hence, we have
	\begin{align}\nonumber
	\mathcal{R}_d&=\frac{1}{d+1}\tr\left( (\Lambda_1 \otimes \Lambda_2) [\rho] \mathcal{T}\right)\\
	&=\frac{1}{d+1}+\frac{d}{d+1}\tr\left( (\Lambda_1 \otimes \Lambda_2) [\rho]\phi^+_d\right).
	\end{align}
	Since we can allow any channels for Alice and Bob, we can choose those that correspond to the maximally entangled fraction of $\rho$. Hence, we have obtained
		\begin{equation}
	\mathcal{R}_d=\frac{1}{d+1}+\frac{d}{d+1}\text{EF}_d(\rho).
	\end{equation}

\end{document}